\spnewtheorem{observation}[theorem]{Observation}{\bfseries}{\itshape}
\begin{document}

\title{Simultaneous Embeddings with Vertices Mapping to Pre-Specified Points}
\author{Taylor Gordon}
\institute{University of Waterloo}
\maketitle

\begin{abstract}

We discuss the problem of embedding graphs in the plane with restrictions on the
vertex mapping. In particular, we introduce a technique for drawing planar
graphs with a fixed vertex mapping that bounds the number of times edges bend.
An immediate consequence of this technique is that any planar graph can be drawn
with a fixed vertex mapping so that edges map to piecewise linear curves with at
most $3n + O(1)$ bends each. By considering uniformly random planar graphs, we
show that $2n + O(1)$ bends per edge is sufficient on average.

To further utilize our technique, we consider simultaneous embeddings of
$k$ uniformly random planar graphs with vertices mapping to a fixed, common
point set. We explain how to achieve such a drawing so that edges map to
piecewise linear curves with $O(n^{1-\frac{1}{k}})$ bends each, which holds with
overwhelming probability. This result improves upon the previously best known
result of $O(n)$ bends per edge for the case where $k \geq 2$. Moreover, we
give a lower bound on the number of bends that matches our upper bound,
proving our results are optimal.

\end{abstract}

\section{Introduction}

Of fundamental importance to graph drawing is the problem of drawing graphs in
the plane with restrictions on how vertices and edges are embedded. Indeed,
discussions on \emph{planar embeddings}, where vertices map to points and
edges map to continuous non-crossing curves, were commensurate with
the introduction of graph theory \cite{biggs}.

Bridges and Prussian cities aside, investigation into the properties of planar
embeddings has been motivated by applications such as \emph{information
visualization} and \emph{VLSI circuit design} (see \cite{aggarwal},
\cite{battista}, \cite{mead}). These applications provide metrics for which
certain embeddings become aesthetically or functionally preferable. For example,
a situation might prefer that edges be drawn as straight lines.

A classic result of F\'ary \cite{fary} showed that all \emph{planar graphs}
permit embeddings in the plane where each edge maps to a straight line segment
(a result independently proven by Wagner \cite{wagner} and Stein \cite{stein}).
If we further restrict the vertices to map to points on an $(n-2) \times (n-2)$
grid, then a planar embedding can still be achieved with edges mapping to
straight line segments \cite{schnyder}.

On the other hand, if the vertex mapping is completely fixed, a straight-line
embedding does not always exist. In fact, it was shown by Pach and Wenger
\cite{pach} that if we require edges to be drawn as \emph{polygonal curves}
(piecewise linear curves) then there does not always exist an embedding with
$o(n^2)$ total bends. Their results went further to show that this lower bound
holds \emph{asymptotically almost surely} for a uniformly random planar graph on
n vertices; that is, the lower bound holds with probability 1 as n tends to
infinity.

Kaufmann and Wiese \cite{kaufmann} considered the case where the range of the
vertex mapping is restricted to a fixed point set $P$ of size $n$. They showed
that any planar graph can be embedded so that each vertex maps to a unique point
in P and each edge maps to a polygonal curve with at most 2 bends. This result
is optimal in that there exists point sets (points on a line, for example) for
which not all planar graphs can be drawn with edges bending at most once.

The problem of drawing graphs to minimize bends has also been discussed in regards
to \emph{simultaneous embeddings}. A simultaneous embedding is a drawing in the
plane of $k$ graphs $G_1, G_2, \dots, G_k$, each over a common vertex set $V$, such
that no two edges of one graph cross. The concept of a simultaneous embedding
with this terminology was introduced in \cite{lubiw}. A related result of
particular interest was discussed in \cite{erten} by Erten and Kobourov. They
considered the special case of constructing a simultaneous embedding for when
$k = 2$. Their results showed that 2 bends per edge suffice to construct a
simultaneous embedding of two planar graphs.

One aim of our paper is to consolidate the above results on embedding graphs
with restrictions on the vertex mapping into a single drawing technique.
Lemma~\ref{lem:main} establishes such a technique that optimally minimizes the
number of bends (up to constant factors). Moreover, for the case where the
vertex mapping is completely fixed, we give a result matching the constant
factor of 3 on the number of bends per edge that was given in \cite{badent}. An
advantage of our technique, however, is that it lends itself well to
probabilistic analysis. Given a fixed vertex mapping, our technique gives at
most $2n$ bends per edge on expectation for a \emph{uniformly random planar
graph}, by which we mean a graph sampled uniformly at random from the set of all
planar graphs over the vertex set $V = \{1,2,\dots,n\}$.

Another aim of our paper is to generalize our results to simultaneous
embeddings. Our goal is to simultaneously embed planar graphs $G_1, G_2, \dots,
G_k$, each over a common vertex set $V$, so that each vertex uniquely maps to
one of $n = |V|$ pre-specified points. Using Lemma~\ref{lem:main}, we give a
construction for which each edge bends $O(n^{1 - \frac{1}{k}})$ times with
\emph{overwhelming probability} if we assume that the $k$ graphs are sampled
uniformly at random; that is, each edge bends $O(n^{1 - \frac{1}{k}})$ times
with probability at least $1 - n^{-c}$ for any \emph{fixed} constant $c$.

We go further to show that our result on simultaneous embeddings is optimal using
information theory. That is, we use an encoding argument to prove a lower bound
that matches our upper bound.

The drawing technique relies fundamentally on results related to book
embeddings, which we introduce in Section~\ref{sec:books}. We describe the
drawing technique in Section~\ref{sec:technique}. Section~\ref{sec:applications}
applies the drawing technique to the case of embedding a uniformly random planar
graph with a fixed vertex mapping. The application of the drawing technique to
simultaneous embeddings is described in Section~\ref{sec:se}. The proofs of the
lower bounds are in Section~\ref{sec:lb}. 

\section{Book Embeddings}
\label{sec:books}

A well-known result regarding \emph{book embeddings} is that all Hamiltonian planar
graphs have \emph{book thickness} 2 (see \cite{bernhart}). A trivial consequence of
this result is that a Hamiltonian planar graph can be embedded in the plane so
that all vertices lie on a common line and all edges lie strictly above or below
this line, except at their ends. Observe that in such an embedding, each edge
can be drawn as a polygonal curve with at most 1 bend (see Fig.~\ref{fig:book1}
for an example).

\begin{figure}
  \centering
  \subfloat[Hamiltonian supergraph]{\label{fig:book1}\includegraphics[width=0.4\textwidth]{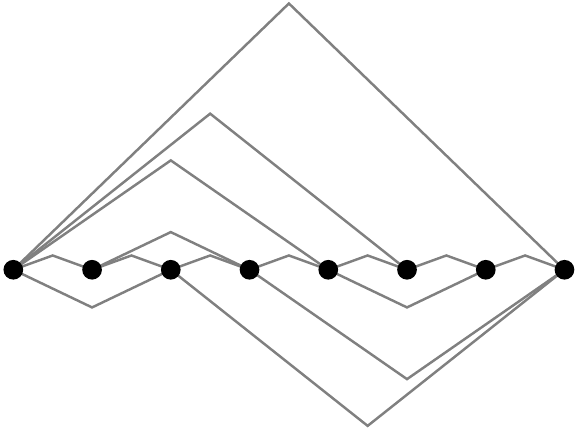}}
  \quad
  \subfloat[Original graph]{\label{fig:book2}\includegraphics[width=0.4\textwidth]{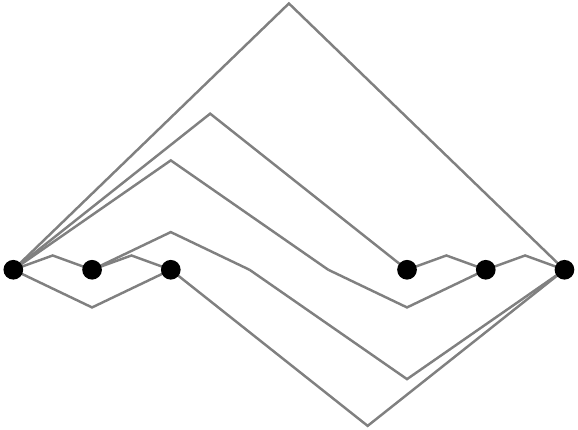}}
  \caption{The induced planar embedding of a graph from a book embedding.}
\end{figure}

Any planar graph can be augmented to become 4-connected by subdividing each edge
at most once and by adding additional edges. A classic theorem of
Tutte~\cite{tutte} showed that all 4-connected planar graphs are Hamiltonian.
It follows that we can always construct a Hamiltonian supergraph $G'$ of a
subdivision of a planar graph $G$ by subdividing each original edge at most
once\footnote{$G'$ can also be constructed in linear time by combining
results from \cite{biedl} and \cite{chiba}.}. From the Hamiltonian graph $G'$,
we can construct a book embedding (as in Fig.~\ref{fig:book1}), which induces an
embedding of the original graph $G$ (as in Fig.~\ref{fig:book2}).
Observation~\ref{obs:book} summarizes this embedding. Note that this embedding
and its construction has been frequently described in graph drawing literature
(as early as \cite{aggarwal}).

\begin{observation}
\label{obs:book}
A planar graph $G$ can be embedded in the plane so that
\begin{enumerate}
  \item all vertices lie on a common line,
  \item each edge bends at most once above the line, at most once below the
        line, and at most once on the line.
\end{enumerate}
\end{observation}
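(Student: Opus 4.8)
The plan is to construct the claimed embedding directly from the chain of results sketched in the paragraph preceding the observation, turning the existence of a book embedding into an explicit planar drawing with the stated bend budget.

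The plan is to realize the chain of reductions already assembled in the paragraph preceding the observation as an explicit polyline drawing and then to bound the bends edge by edge. First I would take the Hamiltonian planar supergraph $G'$ of a subdivision of $G$ in which each edge of $G$ has been subdivided at most once, and invoke the book-thickness-$2$ result of \cite{bernhart}: place every vertex of $G'$ on a common line (the spine) in the order given by the Hamiltonian cycle, and assign every edge to the open half-plane above the line or the open half-plane below it so that no two edges sharing a half-plane cross. I would then draw each edge as a polyline with a single apex in its half-plane, yielding a planar drawing of $G'$ in which every edge has exactly one bend, lying strictly above or strictly below the line. Deleting the added edges and regarding each subdivision vertex as an interior point of the curve of the original edge it subdivides induces a drawing of $G$; it remains only to count bends.

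For the count I would distinguish the two ways an edge $e = uv$ of $G$ can appear in $G'$. If $e$ is not subdivided it is a single edge of $G'$, hence a one-bend curve whose bend lies above or below the line with nothing on the line. If $e$ is subdivided at a vertex $w$, then $w$ lies on the spine and the curve of $e$ is the concatenation of the one-bend sub-arcs $uw$ and $wv$; the turn the curve makes at $w$ is its only contact with the interior of the line and counts as the single on-line bend, while the apexes of $uw$ and $wv$ supply at most one bend above and at most one bend below precisely when the two sub-arcs are assigned to opposite half-planes.

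The step I expect to be the main obstacle is the remaining case, in which both sub-arcs $uw$ and $wv$ are assigned to the same half-plane, since a naive reading then places two apexes on one side and violates the at-most-one-per-side bound. I would resolve this by exploiting the laminar, crossing-free structure of a single book page: any edge drawn in that half-plane with exactly one endpoint strictly inside the spine-interval $(u,v)$ and its other endpoint outside $[u,v]$ would interleave with, and hence cross, one of $uw$ or $wv$, which is impossible. Hence no same-page edge separates $(u,v)$ from the outside at the level of $w$, and I can reroute $e$ as a single arc from $u$ to $v$ in that half-plane with one apex, collapsing the same-page case to a one-bend curve. Verifying that this reroute introduces no crossing with the edges nested inside $(u,v)$ beneath the original sub-arcs, and checking the boundary subcase in which one of $uw$, $wv$ is a spine edge (so that $e$ contributes one on-line turn and one apex), are the routine details that complete the three-part bound.
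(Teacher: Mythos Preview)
Your argument follows exactly the construction the paper sketches: pass to a Hamiltonian planar supergraph $G'$ of a once-subdivided $G$, take its two-page book embedding with one-bend arcs, and read off the induced drawing of $G$. The paper leaves the observation at that informal level and never isolates the case you flag as the obstacle---a subdivided edge whose two halves land on the \emph{same} page---so your laminar reroute is genuine extra work rather than a deviation; it is correct, because the subdivision vertex $w$ has degree~$2$ in that page and hence any arc interleaving with $\{u,v\}$ would already have interleaved with $\{u,w\}$ or $\{w,v\}$, and the reroutes can be performed one edge at a time since distinct subdivision vertices are never endpoints of other original edges.
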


\section{Overview of the Drawing Technique}
\label{sec:technique}

Let $G = (V, E)$ be a planar graph, and suppose that $\gamma : V \to R^2$ is a
fixed vertex mapping. We  define $\delta$ to be any vector in $R^2$ such that
$\delta \cdot \gamma(u) = \delta \cdot \gamma(v)$, for $u,v \in V$, only if $u =
v$ (here $\cdot$ is the standard \emph{dot product} over $R^2$). That is, the
vertices in $V$ map under $\gamma$ to points at unique distances along the
direction of the vector $\delta$. Such a direction can trivially be seen to
always exist.

Suppose that $G$ is embedded as per Observation~\ref{obs:book}. For convenience,
we will refer to this embedding as the \emph{book embedding} of $G$ and the line
on which the vertices lie as the \emph{spine}. We can assume without loss of
generality that $\delta$ is aligned with the spine. Let $v_1, v_2, \dots, v_n$
be the vertices in $V$ as they occur along the direction of $\delta$ in the book
embedding. We relate the mapping $\gamma$ to this embedding of $G$ using
order-theoretic concepts.

\begin{definition}
\label{def:po}
Let $\prec$ be a partial order over $V$ such that $v_a \prec v_b$ if and only
if $a \leq b$ and $\delta \cdot \gamma(v_a) \leq \delta \cdot \gamma(v_b)$.
Similarly, let $\succ$ be a partial order over $V$ such that $v_a \succ v_b$ if
and only if $a \leq b$ and
$\delta \cdot \gamma(v_a) \geq \delta \cdot \gamma(v_b)$.
\end{definition}

Thus, a \emph{chain} with respect to $\prec$ is a set of vertices that occur
along $\delta$ in the same order in both the book embedding of $G$ and under the
mapping $\gamma$. On the other hand, the vertices in a chain with respect to
$\succ$ occur in the reversed order in the book embedding of $G$ from their
order under $\gamma$. Using this notation, we can state the effect of our
drawing technique as follows.

\begin{lemma}
\label{lem:main}
Suppose that $V$ is partitioned into $V_1, V_2, \dots, V_r$ so that $v_a \in
V_i$ and $v_b \in V_j$ satisfy
$\delta \cdot \gamma(v_a) < \delta \cdot \gamma(v_b)$ if $i < j$. Then, if $V_i$
forms a chain with respect to $\prec$ when $i$ is odd and a chain with respect
to $\succ$ when $i$ is even, we can embed $G$ in the plane with the vertex
mapping $\gamma$ using at most $3r + O(1)$ bends per edge. 
\end{lemma}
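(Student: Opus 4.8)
The plan is to start from the book embedding guaranteed by Observation~\ref{obs:book}, in which all vertices lie on the spine (aligned with $\delta$) in the order $v_1, v_2, \dots, v_n$, and each edge bends at most once above, at most once below, and at most once on the spine. The key idea is to realize the target mapping $\gamma$ by a \emph{continuous deformation} of this embedding that slides each vertex $v_i$ along $\delta$ from its spine position to the correct $\delta$-coordinate $\delta \cdot \gamma(v_i)$, while tracking how many new bends this forces on each edge. I would set up a coordinate system where the horizontal axis is the $\delta$-direction; within each block $V_i$, the vertices must end up at their $\gamma$-coordinates, and the blocks are ordered so that all of $V_i$ precedes all of $V_{i+1}$ horizontally (this is exactly the hypothesis that $\delta \cdot \gamma(v_a) < \delta \cdot \gamma(v_b)$ for $i<j$).

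First I would exploit the chain structure. On an odd block $V_i$, the vertices form a chain with respect to $\prec$, meaning their order along the spine agrees with their order under $\gamma$; on an even block they form a chain with respect to $\succ$, meaning the two orders are exactly reversed. The point of this is that \emph{within a single block} the horizontal reordering of vertices is monotone (either identity or full reversal), so an edge whose two endpoints lie in the same block can be redrawn by moving its endpoints horizontally without introducing crossings and with only a bounded number of new bends. The main work is to bound the bends on an edge $uv$ whose endpoints fall in different blocks, say $v_a \in V_i$ and $v_b \in V_j$ with $i < j$: as we reposition the endpoints to their $\gamma$-coordinates, the polygonal curve representing this edge must be rerouted so that it still passes, in horizontal order, through the regions occupied by blocks $V_{i}, V_{i+1}, \dots, V_{j}$, and crucially must avoid the vertices sitting in the intermediate blocks.

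The heart of the argument is a \emph{per-block charge}: I would show that each block an edge passes over forces only $O(1)$ additional bends on that edge. The natural way to see this is that within each block the ambient reordering is a single monotone (order-preserving or order-reversing) map of the $x$-coordinates, which can be realized by a piecewise-linear horizontal homeomorphism of the strip occupied by that block; pushing a curve through one such strip adds a constant number of bends (one to enter and redirect near the top/bottom of the book, one inside, one to exit), accounting for the factor $3$. Since an edge spans at most $r$ blocks, and it starts with at most three bends from the book embedding, the total is at most $3r + O(1)$. I would be careful to verify that the alternation between $\prec$-chains and $\succ$-chains is precisely what keeps the curves from being forced to backtrack horizontally: on a $\succ$-block the original spine order is reversed relative to $\gamma$, so routing the edge along the corresponding arc of the book (above versus below the spine) keeps it monotone through that strip.

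The main obstacle I anticipate is the planarity/non-crossing bookkeeping during the deformation, rather than the bend count itself. Counting bends is a routine $3$-per-strip tally, but proving that the rerouted curves remain pairwise non-crossing and avoid all intermediate vertices requires care: I would need to argue that the above-spine and below-spine arcs of the book embedding can be stretched through the reordering strips independently, using the monotone (chain) structure in each block to guarantee that two arcs which did not cross in the book embedding still do not cross after the horizontal homeomorphism is applied strip by strip. I expect to handle this by processing the blocks left to right and maintaining, as an invariant, that after treating blocks $V_1,\dots,V_i$ the partial drawing is a valid planar polygonal embedding whose vertices in those blocks already sit at their $\gamma$-positions, so that the final drawing realizes $\gamma$ exactly with the claimed bend bound.
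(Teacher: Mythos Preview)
Your proposal has a genuine gap that stems from a misconception about how the blocks sit inside the book embedding. You treat each $V_i$ as occupying a contiguous horizontal ``strip'' through which you can push edges via a piecewise-linear homeomorphism, but in the book embedding the vertices of $V_1,\dots,V_r$ are \emph{interleaved} along the spine: a chain with respect to $\prec$ (or $\succ$) is an increasing (or decreasing) subsequence of $v_1,\dots,v_n$, not a contiguous run. Consequently there is no well-defined ``strip occupied by block $V_i$'' in the source drawing to which a horizontal homeomorphism could be applied, and your left-to-right sweep with the stated invariant does not get off the ground. Relatedly, the claim that an edge from $V_i$ to $V_j$ only needs to be routed through the intermediate strips $V_i,\dots,V_j$ is not justified: to keep the drawing planar you generally have to route pieces of an edge through \emph{all} $r$ target strips, not just those between its endpoints.

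Your accounting for the factor $3$ is also off. In the paper's argument the $3$ does not come from three bends per strip traversal; it comes from the fact that in the book embedding each edge touches the spine in at most three places (its two endpoints and at most one spine crossing). The construction draws, in every target interval $\Delta_i$, one vertical segment for each such spine point $\Gamma_t$ (ordered by spine order when $i$ is odd and by the reverse order when $i$ is even --- this is where the $\prec/\succ$ alternation is used), then joins these segments between consecutive $\Delta$-intervals by tilting the even-indexed families, and finally closes off the free ends above $\Delta_1$ and below $\Delta_r$ using the two pages of the book. Each of the at most three spine points of an edge contributes one curve that bends once per interval, yielding $3r+O(1)$. Your deformation sketch lacks this mechanism for handling the interleaving, and without it the planarity bookkeeping you flag as the ``main obstacle'' cannot be carried out.
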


\begin{proof}
Without loss of generality, we can assume $\delta$ is directed horizontally.
Thus, we can assume that
\begin{enumerate}
  \item $v_1, v_2, \dots, v_n$ are the vertices in $G$ in the order they are
  mapped from left to right in the book embedding,
  \item $V_1, V_2, \dots, V_r$ map under $\gamma$ to the point sets
  $P_1, P_2. \dots, P_r$, respectively, such that all points in $P_i$ occur left
  of all points in $P_{i+1}$, for $i = 1,2,\dots,r-1$,
  \item for odd $i$, the vertices in $V_i$ map to points in $P_i$ with the same
  relative left-to-right order as they occur along the spine of the book
  embedding,
  \item for even $i$, the vertices in $V_i$ map to points in $P_i$ with the
  reverse relative left-to-right order as they occur along the spine of the book
  embedding.
\end{enumerate}
Thus, we can think of the vertex sets $V_1, V_2, \dots, V_r$ as mapping to
disjoint intervals $\Delta_1, \Delta_2, \dots, \Delta_r$ along the x-axis, each
(strictly) containing the points $P_1, P_2, \dots, P_r$ respectively. See
Fig.~\ref{fig:lem1} for an example of such a configuration. We will return to
this idea to show how to partially embed the edges in $G$ inside each interval,
but before doing so, we first introduce some terminology.

\begin{figure}
  \centering
  \includegraphics[width=1.0\textwidth]{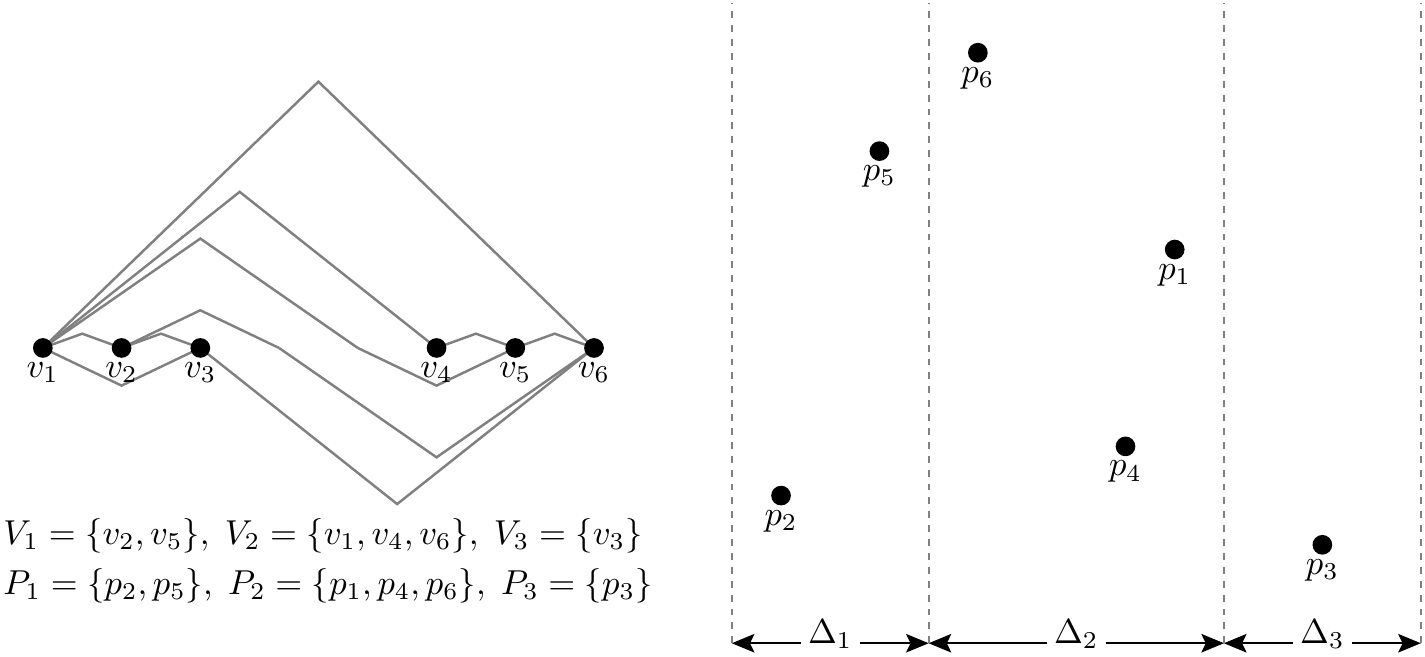}
  \caption{An example configuration for a graph on 6 vertices.}
  \label{fig:lem1}
\end{figure}

All points in the book embedding of $G$ that intersect with the spine either
correspond to a vertex in $G$ or a point at which an edge crosses the spine. Let
$\Gamma_1, \Gamma_2, \dots, \Gamma_h$ correspond to these points in the order
they occur along the spine from left to right. If $\Gamma_i$ corresponds to a
vertex $v$, then we define $\mathbf{vertex}(\Gamma_i) = v$. Furthermore, we
define $\mathbf{top}(\Gamma_i)$ to be the set of edges incident to $v$ that were
embedded on the top page and $\mathbf{bottom}(\Gamma_i)$ to be those embedded on
the bottom page. If $\Gamma_i$ corresponded to a point at which edges crossed
the spine, then $\Gamma_i$ unambiguously refers to this edge.

We now describe how to draw the edges in $G$ inside each of the intervals
$\Delta_1, \Delta_2, \dots. \Delta_r$. We first consider an interval $\Delta_i$,
for which $i$ is odd, with corresponding vertex set $V_i$ and point set $P_i$.
For $t = 1,2,\dots,h$, we will draw a set of vertical lines corresponding to
$\Gamma_t$ as follows (an example of the construction is shown in
Fig.~\ref{fig:lem3}).
\begin{enumerate}
  \item If $\mathbf{vertex}(\Gamma_t) = v \in V_i$, then we draw a vertical line
  above the point $\gamma(v)$ for each edge in $\mathbf{top}(\Gamma_t)$ and a
  vertical line below the point $\gamma(v)$ for each edge in
  $\mathbf{bottom}(\Gamma_t)$.  We then join the ends of these vertical lines to
  $\gamma(v)$ as is shown in Fig.~\ref{fig:lem2}.
  \item If $\mathbf{vertex}(\Gamma_t) = v \in V_j$, where $j > i$, then we draw
  a vertical line somewhere in the interval $\Delta_i$ for each edge in
  $\mathbf{top}(\Gamma_t)$.
  \item If $\mathbf{vertex}(\Gamma_t) = v \in V_j$, where $j < i$, then we draw
  a vertical line somewhere in the interval $\Delta_i$ for each edge in
  $\mathbf{bottom}(\Gamma_t)$.
  \item If $\Gamma_t$ did not correspond to a vertex, we draw a vertical line
  for its unique edge.
  \item All lines drawn for $\Gamma_t$ occur left of all lines drawn for
  $\Gamma_{t+1}$.
\end{enumerate}
\begin{figure}
  \centering
  \includegraphics[width=0.3\textwidth]{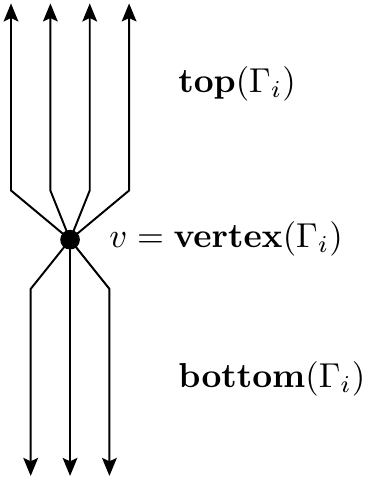}
  \caption{A demonstration of how we join the edges in $\mathbf{top}(\Gamma_i)$ and
           $\mathbf{bottom}(\Gamma_i)$ to the vertex $\mathbf{vertex}(\Gamma_i)$.}
  \label{fig:lem2}
\end{figure}
Note that the final condition enforces that the vertical lines are separated
into intervals, the first corresponding to edges from $\Gamma_1$, the second to
edges from $\Gamma_2$, and so forth. It is clear that the first four conditions
can easily be achieved. The last condition follows by the restriction on the
left-to-right order of the points in $P_i$. Indeed, the vertices in $V_i$ are
mapped to points in $P_i$ that occur from left to right in the same order as the
points on the spine of the book embedding that defined $\Gamma_1, \Gamma_2,
\dots, \Gamma_h$.

For an interval $\Delta_i$, for which $i$ is even, the procedure is symmetric.
The only difference is that the last condition is reversed so that the lines are
drawn corresponding to those from $\Gamma_h$ first, then those from
$\Gamma_{h-1}$, and so forth. It follows by the symmetric definition of $V_i$, for
even $i$, why this construction can be achieved.

Thus, we can repeat the above process for each \emph{$\Delta$-interval}
$\Delta_1, \Delta_2, \dots, \Delta_r$. After this procedure, each
$\Delta$-interval will have a set of lines extending upwards and a set of lines
extending downwards (some of which may correspond to the same edge). When $i$ is
odd, we say that the lines extending upward in $\Delta_i$ are \emph{entering}
$\Delta_i$ and those extending downward are \emph{leaving}. When $i$ is even,
the definitions are reversed. We make a few observations about the configuration
of these lines.
\begin{enumerate}
  \item When $i$ is odd, the lines in $\Delta_i$ define subintervals along the
  $x$-axis that correspond to $\Gamma_1, \Gamma_2, \dots, \Gamma_h$ from left to
  right.
  \item When $i$ is even, the lines in $\Delta_i$ define subintervals along the
  $x$-axis that correspond to $\Gamma_h,\Gamma_{h-1},\dots,\Gamma_1$ from left to
  right.
  \item For contiguous intervals $\Delta_i$ and $\Delta_{i+1}$, the lines
  leaving $\Delta_i$ for a particular $\Gamma_t$ correspond to the same set of
  edges as the lines entering $\Delta_{i+1}$ for $\Gamma_t$.
\end{enumerate}
The last observation follows by construction. Clearly it holds for any
$\Gamma_t$ that corresponded to an edge crossing the spine. Suppose instead that
$v = \mathbf{vertex}(\Gamma_t)$. If the lines leaving $\Delta_i$ corresponded to
the edges in $\mathbf{top}(\Gamma_t)$, then $v \notin V_j$ for all $j \leq i$,
implying that the lines entering $\Delta_{i+1}$ for $\Gamma_t$ also correspond
to $\mathbf{top}(\Gamma_t)$. On the other hand, if the lines leaving $\Delta_i$
corresponded to the edges in $\mathbf{bottom}(\Gamma_t)$, then $v \in V_j$, for
some $j \leq i$, implying that the lines entering $\Delta_{i+1}$ for $\Gamma_t$
also correspond to $\mathbf{bottom}(\Gamma_t)$.

\begin{figure}
  \centering
  \includegraphics[width=0.35\textwidth]{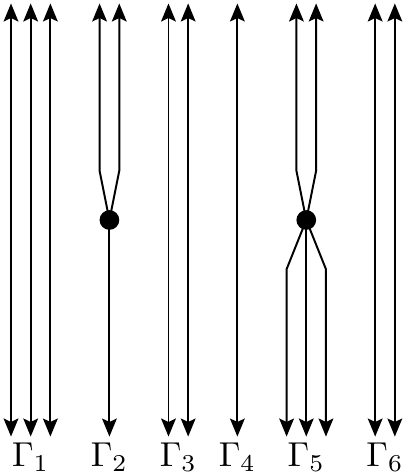}
  \caption{An example of the vertical lines drawn for an interval $\Delta_i$.}
  \label{fig:lem3}
\end{figure}

We proceed to show how to join the vertical lines from contiguous
$\Delta$-intervals. Let $B$ be an axis-aligned box containing all points in $P$
(where $P$ is the image of $V$ under $\gamma$). For even $i$, suppose we were to
rotate all vertical lines in the interval $\Delta_i$ clockwise by a small angle
$\epsilon$  so that the lines remain parallel and only leave the interval
$\Delta_i$ outside of the box $B$. Eventually, the lines drawn for each
$\Gamma_t$ in the interval $\Delta_i$ would intersect with the vertical lines
drawn for $\Gamma_t$ in the interval $\Delta_{i-1}$. We can then terminate the
lines drawn for $\Delta_{i-1}$ and $\Delta_i$ at these intersection points,
hence joining the lines drawn for $\Gamma_t$ in $\Delta_{i-1}$ and $\Delta_i$.
Similarly, if we consider the intersection between the lines extending from
$\Delta_i$ with the vertical lines extending upward from $\Delta_{i+1}$
(assuming $\Delta_{i+1}$ exists), we can again terminate these lines at the
points the lines from a common $\Gamma_t$ intersect. See Fig.~\ref{fig:lem4} for
an example of this procedure.

\begin{figure}
  \centering
  \includegraphics[width=0.7\textwidth]{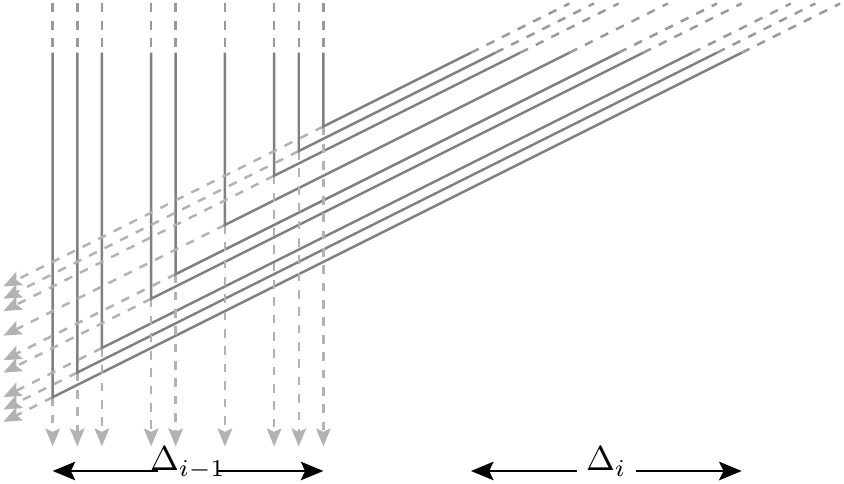}
  \caption{An example of how we join the lines for an interval $\Delta_{i-1}$
           with an interval $\Delta_i$. Note that the lines need not be drawn
           identically spaced in both intervals.}
  \label{fig:lem4}
\end{figure}

By the previous observation, this procedure therefore joins the lines leaving
$\Delta_{i-1}$ to those entering $\Delta_i$ and joins the lines leaving
$\Delta_i$ to those entering $\Delta_{i+1}$. Moreover, since the lines in each
odd-indexed $\Delta$-interval are left unrotated, we can repeat this procedure
for each even-indexed $\Delta$-interval. That is, for each $\Gamma_t$ with
$\mathbf{vertex}(\Gamma_t) = v$, the edges in $\mathbf{top}(\Gamma_t)$ are drawn
from $\gamma(v)$ to a set of vertical lines entering $\Delta_1$ in the same
left-to-right order as these edges were drawn incident to $v$ on the top page of
the book embedding. Similarly, the edges in $\mathbf{bottom}(\Gamma_t)$ are
drawn from $\gamma(v)$ to a set of lines leaving $\Delta_r$ in the same
left-to-right order as these edges were drawn incident to $v$ on the bottom
page. Furthermore, a line is drawn entering $\Delta_1$ and leaving $\Delta_r$
for each edge that had crossed the spine in the book embedding.

To complete the desired embedding, we consider the vertical lines entering
$\Delta_1$ and the lines leaving $\Delta_r$. The vertical lines entering
$\Delta_1$ correspond to the ends of the edges drawn on the top page of the book
embedding (either where they are incident to a vertex or where they cross the
spine). To join the two vertical lines corresponding to the same edge, we can
use the embedding of the top page of the book embedding. The procedure is as
follows. First, truncate the vertical lines at some common $y$-coordinate. Then,
draw the top page of the book embedding above the vertical lines, excluding the
region within some small distance $\epsilon$ from the spine (truncating the
edges before they meet their incident vertices). We can then trivially connect
the ends of the vertical lines to the ends of the truncated edges from the top
page since they occur from left to right in the same order. See
Fig~\ref{fig:lem5} for a depiction of this procedure.

\begin{figure}
  \centering
  \includegraphics[width=0.9\textwidth]{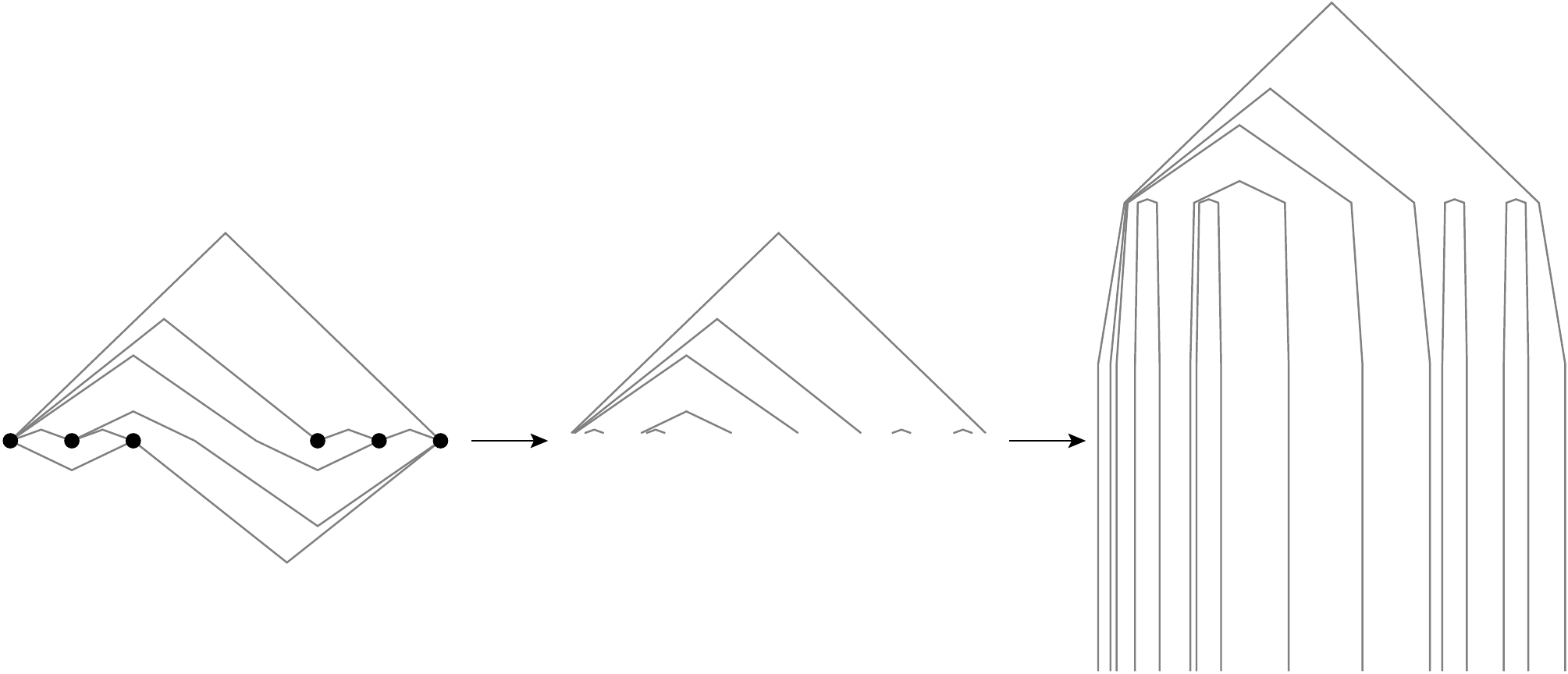}
  \caption{An example of how we join the vertical lines entering $\Delta_1$ by
           using the top page of the book embedding.}
  \label{fig:lem5}
\end{figure}

To join the edges leaving $\Delta_r$ with each other, we consider two cases. If
$r$ is odd, then the lines leave downwards, and we can connect them using the
bottom page in the same manner as we did with the lines entering $\Delta_1$. If
$r$ is even, then the lines leave upwards and we can again join them by using
the bottom page by simply rotating it to face the opposite direction.

We now bound the number of times an edge bends in the embedding. Recall that for
each of the $\Gamma_1,\Gamma_2,\dots,\Gamma_h$ we drew a set of piecewise linear
curves through the intervals $\Delta_1,\Delta_2,\dots,\Delta_r$. Each of these
curves bent at most once for each of the intervals and $O(1)$ times where they
connected to a vertex or joined another curve where they entered $\Delta_1$ or
left $\Delta_r$. By using the embedding from Observation~\ref{obs:book}, it
follows that each edge in $G$ is associated with at most three of
$\Gamma_1,\Gamma_2,\dots,\Gamma_h$. Thus, it follows that each edge bends at
most $3r + O(1)$ times. \qed
\end{proof}

\section{Drawing a Planar Graph with a Fixed Vertex Mapping}
\label{sec:applications}

In Section~\ref{sec:technique}, we established a technique for drawing 
a graph $G=(V,E)$ with a fixed vertex mapping $\gamma$, where the number of
bends is proportional to the size of a partition $V_1,V_2,\dots,V_r$ of $V$
satisfying the conditions of Lemma~\ref{lem:main}. In this section, we discuss
how to construct such a vertex partition for an arbitrary fixed vertex mapping.

Recall the definition of $\prec$ and $\succ$ from Definition~\ref{def:po}.
Clearly, any singleton set forms a chain with respect to both $\prec$ and
$\succ$. Thus, by Lemma~\ref{lem:main}, we can use the partition
$V_1=\{v_1\},V_2=\{v_2\},\dots,V_n=\{v_n\}$ to embed $G$ with any vertex mapping
using $3n + O(1)$ bends per edge. This bound's constant factor matches the best
known result of Badent et al. described in \cite{badent}. Using average-case
analysis, we can improve the bound.

A \emph{uniformly random planar graph} on $n$ vertices is a graph sampled
uniformly at random from the set of all planar graphs over the vertex set
$V=\{1,2,\dots,n\}$. We consider two isomorphic graphs to be different
if their vertex labelings differ. Suppose that we constructed a book
embedding of such a graph as per Observation~\ref{obs:book}. If we do so in a
manner that is independent from the labeling of the vertices, then we can assume
that the vertices occur along the spine of the book embedding in a uniformly
random order. To enforce independence, one could simply relabel the vertices
uniformly at random before constructing the book embedding, reverting to the
original labeling afterwards. Hence, we can make the following observation.

\begin{observation}
\label{obs:random_book}
A uniformly random planar graph $G$ can be embedded in the plane so that
\begin{enumerate}
  \item all vertices lie on a common line in a uniformly random order,
  \item each edge bends at most once above the line, at most once below the
        line, and at most once on the line.
\end{enumerate}
\end{observation}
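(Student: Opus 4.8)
The plan is to derive the statement from Observation~\ref{obs:book} together with a symmetrization over the vertex labels. Property~(2)---the bend bounds above, below, and on the spine---is inherited verbatim from Observation~\ref{obs:book}, since relabeling the vertices changes only which label is attached to each point and not the geometry of the drawing; in particular a relabeled copy of $G$ is still planar, so Observation~\ref{obs:book} continues to apply to it. Hence the entire content to be established is property~(1): that the left-to-right order of the vertices along the spine is a uniformly random permutation of $V$.

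To force the spine order to be uniform, I would run the (deterministic) construction of Observation~\ref{obs:book} not on $G$ directly but on a randomly relabeled copy. Concretely, I draw a permutation $\sigma$ uniformly from $S_n$ independently of $G$, form the relabeled graph $H = \sigma(G)$, compute a book embedding of $H$ with some spine order $o(H) = (h_1,\dots,h_n)$ (the sequence of $H$-labels read left to right), and then revert the labels so that the vertex occupying spine position $j$ carries its original label $\sigma^{-1}(h_j)$. This produces an embedding of $G$ with the required bend structure whose spine order is the random permutation $\Pi \colon j \mapsto \sigma^{-1}(h_j)$.

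The heart of the argument---and the step I expect to require the most care---is showing that $\Pi$ is uniform on $S_n$. The key fact is that $\sigma$ is independent of $H$, with each uniformly distributed: because $G$ is uniform over the finite set of labeled planar graphs on $V$ and relabeling is a bijection of that set, a direct computation gives $P(\sigma = s \mid H = h) = 1/n!$ for every valid $h$ and $s$, so $H$ is uniform and $\sigma$ is independent of it. Conditioning on $H = h$, the sequence $(h_1,\dots,h_n)$ is then a fixed permutation while $\sigma^{-1}$ remains an independent uniform permutation, so $\Pi = \sigma^{-1}\circ o(h)$ is the product of a fixed permutation with a uniform one and is therefore uniform; since this holds for every $h$, $\Pi$ is unconditionally uniform. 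The only subtlety to guard against is the temptation to treat the structure of $G$ and the spine order as correlated through the deterministic construction or through graph automorphisms; the labeled-graph viewpoint dissolves this, because the computation above tracks labels explicitly and never appeals to the unlabeled isomorphism class.
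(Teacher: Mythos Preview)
Your proposal is correct and follows exactly the approach the paper sketches in the paragraph preceding the observation: relabel the vertices by a uniform random permutation, apply the construction of Observation~\ref{obs:book}, and revert the labels. You have simply made rigorous what the paper asserts informally, in particular the independence of $\sigma$ from $H=\sigma(G)$ (which is where the uniformity of $G$ is actually used) and the resulting uniformity of the spine order $\Pi=\sigma^{-1}\circ o(H)$.
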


By Observation~\ref{obs:random_book}, our analysis on random planar graphs
reduces to an analysis of random permutations. The proof of the next theorem
delineates this point.

\begin{theorem}
\label{thm:fixed}
A uniformly random planar graph $G=(V,E)$ can be embedded in the plane with a fixed
vertex mapping using at most $2n + O(1)$ bends per edge on expectation.
\end{theorem}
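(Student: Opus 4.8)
By Observation~\ref{obs:random_book}, the vertices of a uniformly random planar graph $G$ can be taken to appear along the spine of the book embedding in a uniformly random order. The plan is to reduce the entire problem to analyzing this random permutation. Specifically, I want to build a partition $V_1, V_2, \dots, V_r$ satisfying the hypotheses of Lemma~\ref{lem:main} using as few parts as possible, since that lemma yields $3r + O(1)$ bends per edge; to reach the claimed $2n + O(1)$ bound I will instead track the contribution more carefully and aim for an expected part-count that translates into the stated constant.

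\smallskip

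First I would fix the target order induced by the vertex mapping $\gamma$, namely the left-to-right order of the points along $\delta$, and compare it against the uniformly random spine order. A part $V_i$ with odd index must be a chain with respect to $\prec$ (an increasing run, where spine-order and $\gamma$-order agree), and a part with even index must be a chain with respect to $\succ$ (a decreasing run). The key structural step is to observe that greedily partitioning the sequence into maximal alternating monotone runs — increasing, then decreasing, then increasing, and so on — produces exactly such a partition, and the intervals $\Delta_i$ along the x-axis are automatically consistent because consecutive parts are separated by the $\gamma$-order. Thus $r$ is governed by the number of alternating monotone runs in a random permutation.

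\smallskip

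The core of the argument is then the combinatorial fact that a uniformly random permutation of $\{1, \dots, n\}$ has, in expectation, a number of ascents (equivalently, descents) concentrated around $(n-1)/2$. Each maximal monotone run boundary corresponds to a change between ascending and descending behavior, so the expected number of runs is controlled by the expected number of ``direction changes,'' which by linearity of expectation over the $n-2$ interior adjacent triples is roughly $(n-2)/2$. Carrying the constant through Lemma~\ref{lem:main} with $3r$ would not by itself give $2n$; the improvement comes from recognizing that each edge is associated with at most three of the $\Gamma$-values, but the per-interval bend count can be amortized across the runs so that the dominant term is $2 \cdot (\text{expected run count})$ rather than $3r$ on every edge. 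I would make this precise by re-examining which of the three $\Gamma$-associations of an edge actually incur a bend inside each $\Delta$-interval.

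\smallskip

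The main obstacle I anticipate is the last point: squeezing the constant from $3$ down to $2$. The $3r + O(1)$ bound in Lemma~\ref{lem:main} is worst-case and charges three bends per interval per edge, whereas the $2n$ claim requires showing that, on average over the random spine order, the relevant quantity is the expected number of runs times $2$, not $3$. I would resolve this by coupling the run-count expectation (about $n/2$) with the factor contributed by the lemma, being careful that the $O(1)$ additive terms for joining at $\Delta_1$ and $\Delta_r$ and for connecting to vertices do not accumulate per edge. Everything else — the validity of the alternating-run partition and the expectation computation for descents — is routine once the reduction to random permutations is in place.
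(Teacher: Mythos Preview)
Your overall strategy---reduce to a random permutation via Observation~\ref{obs:random_book}, greedily partition into alternating increasing/decreasing runs, bound the expected number of runs by linearity of expectation over interior triples, and invoke Lemma~\ref{lem:main}---is exactly the paper's approach. The gap is purely arithmetical, and it leads you to propose unnecessary extra work.

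For an interior index $i$, the event ``$\alpha(v_i)$ is a local extremum among $\alpha(v_{i-1}),\alpha(v_i),\alpha(v_{i+1})$'' has probability $2/3$, not $1/2$: of the $3!$ equally likely orderings of a triple, four make the middle element a maximum or minimum and only two make it intermediate. Hence the expected number of direction changes is $\tfrac{2}{3}(n-2)$, and the expected number of alternating runs satisfies $\mathbf{E}[r]\le \tfrac{2}{3}n + O(1)$. Plugging this directly into the $3r+O(1)$ bound of Lemma~\ref{lem:main} already yields $2n+O(1)$ bends per edge in expectation. There is no need to ``squeeze the constant from $3$ down to $2$'' by re-examining the per-interval bend accounting; that part of your proposal is both unnecessary and vague, and you should drop it. (Incidentally, even your own underestimate of $(n-2)/2$ runs would give $3r\approx 3n/2<2n$, so the sentence claiming that $3r$ ``would not by itself give $2n$'' is internally inconsistent as well.)
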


\begin{proof}
Let $\gamma$ be a fixed vertex mapping, and let $\delta$ be a direction for
which $\delta \cdot \gamma(u) = \delta \cdot \gamma(v)$, for $u,v \in V$, only
if $u = v$. Let $p_1, p_2, \dots, p_n$ be the points in the image of $V$ under
$\gamma$ in the order they occur along $\delta$. Define
$v_1=\gamma^{-1}(p_1),v_2=\gamma^{-1}(p_2),\dots,v_n=\gamma^{-1}(p_n)$.

Embed $G$ as per Observation~\ref{obs:random_book} so that the spine is aligned
with the direction $\delta$. For $i=1,\dots,n$, define $\alpha(v_i)$ to be the
index along $\delta$ at which $v_i$ occurs in this embedding. Thus, by
Definition~\ref{def:po}, $v_i \prec v_j$ if $i \leq j$ and
$\alpha(v_i) \leq \alpha(v_j)$. Similarly, $v_i \succ v_j$ if $i \leq j$ and
$\alpha(v_i) \geq \alpha(v_j)$.

By our choice of embedding, $\alpha(v_1),\alpha(v_2),\dots,\alpha(v_n)$ is a
uniformly random permutation of $1,2,\dots,n$. Construct a partition of $V$ as
follows. Let $t_1$ be the largest index such that
$\alpha(v_1),\alpha(v_2),\dots,\alpha(v_{t_1})$ is increasing. Then, let $t_2$
be the largest index such that
$\alpha(v_{t_1+1}),\alpha(v_{t_1+2}),\dots,\alpha(v_{t_2})$ is decreasing.
Repeat this process for $t=3,\dots,r$, maximizing increasing sequences when $i$
is odd and decreasing sequences when $i$ is even. The partition
$V_1=\{v_1,v_2,\dots,v_{t_1}\}$, $V_2=\{v_{t_1+1},v_{t_1+2},\dots,v_{t_2}\}$,
$\dots$, $V_r=\{v_{t_{r-1}+1},v_{t_{r-1}+2},\dots,v_{t_r}\}$ satisfies the
conditions of Lemma~\ref{lem:main} by construction. We can therefore construct
the desired embedding of $G$ so long as $r$ is at most $\frac{2}{3}n + O(1)$.

Thus, to complete the proof we consider how large $r$ is on average. Let $X$ be
the set of integers $1 < i < n$ for which
$\alpha(v_{i-1}),\alpha(v_{i+1}) < \alpha(v_i)$ or
$\alpha(v_{i-1}),\alpha(v_{i+1}) > \alpha(v_i)$. Clearly $r \leq |X| + 2$. Let
$X_2, X_3, \dots, X_{n-1}$ be indicator variables such that $X_i = 1$ if $i \in X$
and $X_i = 0$ otherwise. By linearity of expectation, it follows that
\[
\mathbf{E}|X| = \sum_{i=1}^n \mathbf{E}X_i
\]
and since $\mathbf{E}X_i = \mathbf{P}[X_i = 1] = 2/3$, it follows that
$r \leq 2/3(n+1)$. \qed
\end{proof}

\section{Simultaneous Embeddings with a Fixed Vertex Range}
\label{sec:se}

In this section, we consider the problem of embedding $k$ uniformly random
planar graphs $G_1,G_1,\dots,G_k$ over a common vertex set $V$, where the range
of the vertex mapping $\gamma$ is restricted to a fixed point set $P$ of size $n
= |V|$. As in the proof of Theorem~\ref{thm:fixed}, our analysis relies on
properties of uniformly random permutations.
\begin{lemma}[\cite{brightwell}]
\label{lem:bri}
Let $\pi_1,\pi_2,\dots,\pi_k$ be uniformly random permutations over the set
$S=\{1,2,\dots,n\}$. Then, there exists a partition $T_1,T_2,\dots,T_r$ of $S$, where
the elements in each part form an increasing subsequence in each of
$\pi_1,\pi_2,\dots,\pi_k$, such that $r$ is $O(n^{1 - \frac{1}{k+1}})$ with
overwhelming probability.
\end{lemma}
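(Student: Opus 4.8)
The plan is to recast the statement geometrically and then exhibit a cover by few chains. To each element $s \in S$ I associate the point $x_s = (s,\ \pi_1^{-1}(s),\ \dots,\ \pi_k^{-1}(s)) \in \{1,\dots,n\}^{k+1}$, whose first coordinate records the natural value order and whose remaining coordinates record the position of $s$ in each $\pi_j$. A subset $T \subseteq S$ forms an increasing subsequence in all of $\pi_1,\dots,\pi_k$ simultaneously precisely when, listing $T$ in increasing value order, every $\pi_j^{-1}$ increases as well; that is, when the points $\{x_s : s \in T\}$ are pairwise comparable in the coordinatewise product order. Thus a partition of $S$ into $r$ common increasing subsequences is exactly a cover of the $n$ points $\{x_s\}$ by $r$ chains in the $(k+1)$-dimensional product order, and the smallest such $r$ equals the size of the largest antichain by Dilworth's theorem. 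Since $\pi_1,\dots,\pi_k$ are independent uniform permutations, the last $k$ coordinates behave like independent uniform samples, so $\{x_s\}$ behaves like $n$ uniform random points in a cube, and for the upper bound I only need to construct \emph{some} cover of the claimed size.

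To build the cover I would overlay a grid of $m^{k+1}$ congruent axis-parallel cells, with $m = \lceil n^{1/(k+1)} \rceil$, so that each cell has expected occupancy $O(1)$. If two cells differ by $(1,\dots,1)$, every coordinate strictly increases, so any point of the lower cell is strictly dominated by any point of the upper cell; hence the ``diagonal'' sequence of cells obtained by repeatedly adding $(1,\dots,1)$ is a strict chain of cells, and selecting one point from each of its cells yields a genuine chain of points. These diagonals, indexed by a starting cell having some coordinate equal to $1$, partition the cell grid, and their number is $m^{k+1} - (m-1)^{k+1} = O(m^{k}) = O(n^{1-1/(k+1)})$. For each diagonal I emit one point-chain per ``layer'' of occupants, so the total number of point-chains is at most the number of diagonals times the maximum cell occupancy.

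The main obstacle is therefore to control cell occupancy with a failure probability small enough to qualify as overwhelming, and to remove the spurious logarithmic factor that a naive occupancy bound introduces. A Chernoff bound shows that a fixed cell is unlikely to hold more than $O(\log n / \log\log n)$ points, and a union bound over the at most $n$ cells controls the global maximum occupancy with probability $1 - n^{-c}$ for every fixed $c$; this already gives $O(n^{1-1/(k+1)}\,\log n / \log\log n)$ chains. To shave the logarithm I would use a two-scale refinement: within each overfull cell the restricted points are a rescaled instance of the same problem, so I recurse (or pass to a coarser grid whose cells hold many points) and concatenate the resulting within-cell chains along the strictly increasing diagonal, which is legal exactly because consecutive cells strictly dominate one another. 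Balancing the two grid scales should yield $O(n^{1-1/(k+1)})$ chains with overwhelming probability; I expect the bookkeeping of this recursion, rather than any individual probabilistic estimate, to be the delicate part. Finally, the exponent is best possible, matching the heuristic that the longest common increasing subsequence has length $\Theta(n^{1/(k+1)})$, forcing $\Omega(n^{1-1/(k+1)})$ parts.
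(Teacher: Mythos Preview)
The paper does not prove this lemma; it simply attributes the bound to Brightwell and cites the reference. So there is no in-paper argument to compare against.

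That said, your geometric reformulation is correct and is exactly the setting of Brightwell's work: the partial order you describe is the random $(k{+}1)$-dimensional order, and a partition of $S$ into common increasing subsequences is precisely a chain cover of the point set, whose minimum size equals the width by Dilworth. Your grid-and-diagonals construction is a standard route to the upper bound and yields $O(n^{1-1/(k+1)})$ diagonals times the maximum cell occupancy. You correctly flag that the naive occupancy bound costs an extra factor of $\log n/\log\log n$, and that removing it is where the real work lies. The recursion you sketch---treat the contents of each cell as a smaller instance and concatenate the resulting sub-chains along the diagonal, which is legitimate because consecutive diagonal cells strictly dominate one another---is the right instinct, but as written it is only a plan: you would still need to set up and solve the recurrence, controlling both the number of chains and the failure probability across all recursive levels so that neither accumulates. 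An alternative that sidesteps the explicit construction is to invoke Dilworth and bound the largest antichain directly; either route leads to the stated exponent, which your closing remark correctly identifies as tight.
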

This bound was established by Brightwell in \cite{brightwell}. The following
result follows by combining this bound with Lemma~\ref{lem:main}.
\begin{theorem}
\label{thm:main}
If $G_1,G_2,\dots,G_k$ are uniformly random planar graphs, then we can embed
each graph in the plane with a common vertex mapping $\gamma : V \to P$ so that
all edges have $O(n^{1-\frac{1}{k}})$ bends each with overwhelming probability.
\end{theorem}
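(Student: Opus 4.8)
The plan is to exploit the freedom in choosing the vertex mapping $\gamma$ to produce a single chain decomposition of the $k$ spine orders, and then apply Lemma~\ref{lem:main} to each graph separately. First I would fix a direction $\delta$ along which the points of $P$ have distinct projections and list $P$ as $p_1,\dots,p_n$ in this order; choosing $\gamma$ then amounts to choosing the order $\tau$ in which the vertices are assigned to $p_1,\dots,p_n$. Embedding each $G_\ell$ as in Observation~\ref{obs:random_book} yields a uniformly random spine order $\sigma_\ell$, and since the $G_\ell$ are independent these $k$ orders are independent and uniform. Unwinding Definition~\ref{def:po}, a $\gamma$-contiguous part is a $\prec$-chain (resp.\ $\succ$-chain) for $G_\ell$ precisely when, read in $\tau$-order, the spine positions $\sigma_\ell$ increase (resp.\ decrease); thus the bend bound for $G_\ell$ coming from Lemma~\ref{lem:main} is controlled by the number of $\gamma$-contiguous monotone runs of $\sigma_\ell$.

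The main idea, and the source of the improvement from $O(n^{1-\frac{1}{k+1}})$ to $O(n^{1-\frac{1}{k}})$, is to spend one of the permutations on the free mapping $\tau$ rather than on a random graph. Concretely, I would relabel $V$ so that its natural order coincides with the spine order of $G_1$; then $\sigma_1$ becomes the identity while $\sigma_2,\dots,\sigma_k$ remain independent and uniform. Applying Lemma~\ref{lem:bri} to the $k-1$ permutations $\sigma_2,\dots,\sigma_k$ produces, with overwhelming probability, a partition $T_1,\dots,T_r$ of $V$ with $r = O(n^{1-\frac{1}{k}})$ in which each $T_j$ is increasing in the natural order and in every $\sigma_2,\dots,\sigma_k$. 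Because the natural order is exactly $\sigma_1$, each $T_j$ is in fact a common increasing subsequence of all $k$ spine orders $\sigma_1,\dots,\sigma_k$.

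I would then define $\tau$ (hence $\gamma$) so that the blocks $T_1,\dots,T_r$ occur left to right along $\delta$, each block internally in its common increasing order. For each $G_\ell$, reading the spine positions in $\tau$-order now gives a concatenation of at most $r$ increasing runs, so $V$ can be cut into at most $2r$ $\gamma$-contiguous parts that alternate between these increasing runs (the odd parts, which are $\prec$-chains) and empty or singleton decreasing runs (the even parts, which are trivially $\succ$-chains); this meets the hypotheses of Lemma~\ref{lem:main}. Lemma~\ref{lem:main} then embeds $G_\ell$ with at most $3(2r)+O(1)=O(n^{1-\frac{1}{k}})$ bends per edge, simultaneously for all $\ell$ and with overwhelming probability. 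The one delicate point to get right is the reduction in the second paragraph --- recognizing that the implicit ``natural-order'' coordinate of Lemma~\ref{lem:bri} can absorb one graph, so that only $k-1$ genuinely random permutations enter Brightwell's bound; the remaining work (independence of the spine orders after relabeling, and repackaging $r$ increasing runs as an admissible alternating $\prec/\succ$ partition without more than doubling the count) is routine.
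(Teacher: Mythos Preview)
Your proposal is correct and follows essentially the same route as the paper: fix the spine order of $G_1$ as the reference order, feed the remaining $k-1$ random spine orders into Brightwell's lemma to obtain $r=O(n^{1-1/k})$ common increasing blocks, define $\gamma$ by laying these blocks left to right along~$\delta$, pad with empty parts to satisfy the $\prec/\succ$ alternation of Lemma~\ref{lem:main}, and apply that lemma to each $G_\ell$. Your explanation of why only $k-1$ permutations enter Brightwell's bound is in fact more explicit than the paper's, but the argument is the same.
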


\begin{proof}
Let $\delta$ be a direction for which $\delta \cdot p = \delta \cdot q$, for
$p,q \in P$, only if $p = q$. Let $p_1, p_2, \dots, p_n$ be the points in $P$ in
the order they occur along $\delta$. Embed $G_1$ as per
Observation~\ref{obs:random_book} so that its spine is aligned with $\delta$.
Let $v_1,v_2,\dots,v_n$ be the vertices in $V$ in the order they occur along
$\delta$ in this embedding. Embedding $G_2,\dots,G_k$ in the same manner gives
the corresponding vertex orders $\pi_2, \pi_3, \dots, \pi_n$, where $\pi_i$ is a
uniformly random permutation of $v_1,v_2,\dots,v_n$. By Lemma~\ref{lem:bri}, it
follows that $V$ can be partitioned into $V_1,V_2,\dots,V_r$ such that the
vertices in $V_i$ occur along $\delta$ in the same order in the embeddings of
each of $G_1,G_2,\dots,G_k$. Furthermore, $r$ is $O(n^{1 - \frac{1}{k}})$ with
overwhelming probability. 

Let $\{u_1,\dots,u_{t_1}\} = V_1,\{u_{t_1+1},\dots,u_{t_2}\}=V_2,
\dots,\{u_{t_{r-1}+1},\dots,u_n\}=V_r$ such that $u_i$ occurs before $u_{i+1}$
along $\delta$ in the embedding of $G_1$, for all $i$, if $u_i,u_{i+1} \in V_j$
for some $j$. Consider the vertex mapping $\gamma$, defined such that
$\gamma(u_1)=p_1,\gamma(u_2)=p_2,\gamma(u_n)=p_n$.  By construction, each $V_i$
forms a chain with respect to $\prec$ from Definition~\ref{def:po}. Since
Lemma~\ref{lem:main} requires a partition that alternates between chains with
respect to $\prec$ and $\succ$, we can introduce empty sets into our partition
after each $V_i$, at most doubling its size. This extensions suffices since the
empty set forms a chain with respect to both $\prec$ and $\succ$. Moreover,
since $r$ is $O(n^{1 - \frac{1}{k}})$, so is this extension, and thus the claim
follows by Lemma~\ref{lem:main}. \qed
\end{proof}

\section{Lower Bounds on the Number of Bends}
\label{sec:lb}

In this section we prove that Theorem~\ref{thm:main} is optimal by using an
encoding argument. The proof relies on the following lemma.

\begin{lemma}
\label{lem:lb}
If the planar graph $G$ can be drawn in the plane with $\beta$ total bends under
a fixed vertex mapping that maps $V$ to a convex point set, then $G$ can be
encoded using \[ n \lg\left(\frac{\beta + n}{n}\right) + O(n)\]
bits.
\end{lemma}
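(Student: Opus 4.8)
The plan is to pass to the \emph{planarization} $H$ of the given drawing: keep the $n$ vertices of $G$ at their fixed convex points, promote every bend to a degree-two vertex, and let each straight segment become an edge. Then $H$ is a plane graph with $n+\beta$ vertices and $m+\beta$ edges, where $m=|E|\le 3n-6$; I may assume $G$ (hence $H$) is connected, since recording the component structure otherwise costs only $O(n)$ bits. The crucial observation is that, because the $\beta$ new vertices all have degree two, Euler's formula gives $H$ exactly $m-n+2=O(n)$ faces, \emph{independent of} $\beta$. All of the combinatorial complexity that grows with $\beta$ is thus confined to long induced paths of degree-two vertices, while the ``skeleton'' of the map stays of size $O(n)$. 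I would therefore split the codeword into an $O(n)$-bit description of this skeleton together with a cheap description of how the $\beta$ bends are distributed along it.

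For the skeleton, suppress every degree-two vertex of $H$ to recover the embedded graph $G$ as an (unlabeled) plane map with $n$ vertices and $m=O(n)$ edges; unlabeled plane maps of this size number $2^{O(n)}$ by any linear-size encoding of planar maps, so the skeleton costs $O(n)$ bits. For the bends, I would charge each of the $\beta$ bends to one of the $n$ \emph{vertices} (for instance an endpoint of the edge carrying it); the resulting per-vertex counts $c_1,\dots,c_n$ satisfy $\sum_i c_i=\beta$, and the number of such compositions of $\beta$ into $n$ parts is $\binom{\beta+n-1}{n-1}$. The standard estimate $\lg\binom{\beta+n-1}{n-1}\le n\lg\frac{e(\beta+n)}{n}=n\lg\frac{\beta+n}{n}+O(n)$ (equivalently, concavity of $\lg$) shows this part supplies the dominant term. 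Note that using $n$ slots rather than the up to $3n$ \emph{edges} is exactly what produces the leading constant $n$ demanded by the statement, so it is important to charge bends to vertices and not to edges.

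Reconstruction then rebuilds $H$ from the skeleton and the counts, redraws it with its $n$ special vertices at the fixed, known convex points, and reads off the edge set of $G$. The step I expect to be the main obstacle is the \emph{labeling}: the decoder must decide which skeleton vertex maps to which point $p_i$, and for an arbitrary fixed vertex set this matching carries up to $\lg(n!)=\Theta(n\lg n)$ bits of entropy. The heart of the argument must be to show that \emph{convex position together with non-crossingness} annihilates almost all of this freedom — that the known cyclic order of the points forces the skeleton-to-point matching up to only $2^{O(n)}$ choices, and that the way each subdivided edge may wind relative to the convex points is already pinned down, up to $2^{O(n)}$ slack, by the per-vertex bend profile. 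Concretely, I would try to prove that the number of non-crossing routings compatible with a fixed skeleton and a fixed charging $(c_1,\dots,c_n)$ is at most $2^{O(n)}\binom{\beta+n-1}{n-1}$, so that resolving the labeling adds only $O(n)$ bits. Granting this, the three contributions sum to $n\lg\frac{\beta+n}{n}+O(n)$; the residual bookkeeping (self-delimiting concatenation of the parts, curves that leave the convex hull, and the disconnected case) is routine and is absorbed into the additive $O(n)$ term.
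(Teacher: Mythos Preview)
Your proposal has a genuine gap, and you have in fact located it yourself: the step you call ``the main obstacle'' is not a technicality but the entire content of the lemma. Encoding the suppressed skeleton costs $O(n)$ bits, and writing down an arbitrary composition $(c_1,\dots,c_n)$ of $\beta$ costs $n\lg\frac{\beta+n}{n}+O(n)$ bits; neither of these is in dispute. The problem is that the composition, as you define it, is \emph{chosen by the encoder} and carries no geometric meaning --- it is just a vehicle for $\approx n\lg\frac{\beta+n}{n}$ bits. So your scheme is valid if and only if the number of labeled planar graphs sharing a fixed unlabeled plane skeleton and admitting a $\beta$-bend drawing on the given convex points is at most $2^{O(n)}\binom{\beta+n}{n}$. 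But that counting statement is equivalent to the lemma you are trying to prove; invoking it without proof is circular. Your sketch (``convex position together with non-crossingness annihilates almost all of this freedom'') gives no mechanism for why a bend-count vector pins down the routing up to $2^{O(n)}$ choices, and in particular your ``charge bends to endpoints'' heuristic produces counts that depend only on degrees when all edges happen to have the same number of bends, so the counts themselves cannot be doing the work.

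The paper's proof takes a completely different route that actually supplies this mechanism. It first reduces to a spanning tree (recovering $G$ from $T$ costs $O(n)$ bits via a Hamiltonian-cycle trick), then superimposes $T$ with the convex-hull cycle $C$ to obtain a Hamiltonian $3$-regular graph whose \emph{crossing number} is at most $2\beta$ (each straight piece of a polyline crosses $C$ at most twice). The heart of the argument is a recursive encoding of Hamiltonian $3$-regular graphs using the Leighton--Sykora--Vrt'o edge separator of size $O(\sqrt{\mathrm{cr}(G)+n})$: the recursion solves to $\frac{N}{2}\lg\frac{\mathrm{cr}+N}{N}+O(N)$ bits, and with $N=2n-2$ and $\mathrm{cr}\le 2\beta$ this gives the stated bound. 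The point is that the separator theorem is what quantitatively converts ``few bends on convex points $\Rightarrow$ small crossing number with the hull'' into ``few bits''; your planarization/compositions idea never engages with why convexity matters, which is why the labeling step stalls.
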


The proof of this lemma is lengthy and is thus deferred to
Appendix~\ref{app:lb}. The following result follows by the information theoretic
lower bound on the number of bits required to encode a planar graph.

\begin{theorem}
Let $G_1,G_2,\dots,G_k$ be uniformly random planar graphs over the vertex set
$V$, and let $P$ be a convex point set of size $|V| = n$. Then, in all
simultaneous embeddings of $G_1,G_2,\dots,G_k$ that map $V$ to $P$, at least
one of $G_1,G_2,\dots,G_k$ has $\Omega(2^{1 - \frac{1}{k}})$ total bends with
overwhelming probability.
\end{theorem}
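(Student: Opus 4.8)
The plan is to prove a lower bound on the total number of bends via an encoding/counting argument, complementing the upper bound of Theorem~\ref{thm:main}. Note first that the statement's exponent $2^{1-\frac{1}{k}}$ appears to be a typo for $n^{2-\frac{1}{k}}$: the upper bound gives $O(n^{1-\frac{1}{k}})$ bends \emph{per edge}, and a planar graph on $n$ vertices has $\Theta(n)$ edges, so the natural matching \emph{total} bend count is $\Theta(n^{2-\frac{1}{k}})$. I will aim to show that at least one $G_i$ requires $\Omega(n^{2-\frac{1}{k}})$ total bends with overwhelming probability.

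The core idea is a two-sided counting argument. On one side, I would use the information-theoretic lower bound on the number of bits needed to specify a labeled planar graph: there are roughly $2^{\Theta(n)}$ planar graphs on $n$ labeled vertices (the number is $2^{cn + o(n)}$ for a known constant $c$), so specifying one requires $\Omega(n)$ bits, and specifying a \emph{tuple} $(G_1,\dots,G_k)$ of $k$ independent uniformly random planar graphs requires $\Omega(kn)$ bits with overwhelming probability (a uniformly random such tuple has near-maximal entropy, so with overwhelming probability it is not compressible below, say, $\frac{1}{2}kn$ bits). On the other side, Lemma~\ref{lem:lb} says that a single planar graph drawn with $\beta$ total bends under a convex-point fixed vertex mapping can be encoded in $n\lg\!\left(\frac{\beta+n}{n}\right) + O(n)$ bits. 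The key structural observation is that once we fix the common point set $P$ and its convex position, a \emph{simultaneous} embedding of $(G_1,\dots,G_k)$ lets us encode the whole tuple: the shared vertex mapping $\gamma:V\to P$ costs only $O(n\log n)$ bits (a single permutation, which is $o(kn)$ for large $k$ but must be handled for small $k$), and then each $G_i$ is encoded separately using Lemma~\ref{lem:lb} applied to its individual drawing with $\beta_i$ total bends.

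Combining the two sides, the total encoding length is at most $\sum_{i=1}^k \left[ n\lg\!\left(\tfrac{\beta_i+n}{n}\right) + O(n)\right] + O(n\log n)$, and this must be at least the incompressibility lower bound $\Omega(kn)$. The $O(n)$ terms per graph already account for the $\Omega(kn)$ entropy, so to extract a nontrivial bend bound I would instead run the argument against the \emph{higher-order} term in the planar graph count: more carefully, the number of labeled planar graphs is $2^{cn}$ with $c$ a specific constant strictly larger than any constant achievable by the $O(n)$ slack alone, forcing $\sum_i n\lg\!\left(\tfrac{\beta_i+n}{n}\right)$ to be $\Omega(kn)$ as well unless additional structure is used. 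Setting $\beta = \max_i \beta_i$ and using $\sum_i n\lg\!\left(\tfrac{\beta_i+n}{n}\right) \le kn\lg\!\left(\tfrac{\beta+n}{n}\right)$, I would derive $kn\lg\!\left(\tfrac{\beta+n}{n}\right) = \Omega(kn)$, hence $\lg\!\left(\tfrac{\beta+n}{n}\right) = \Omega(1)$, which unfortunately only yields $\beta = \Omega(n)$ and not the target exponent. The resolution, and the crux of the whole proof, is to exploit the \emph{joint} structure: the simultaneous constraint means the $k$ drawings share the same convex cyclic order, so the pair/tuple cannot be encoded as $k$ independent graphs but rather their bend counts trade off against one another, and the genuine lower bound on entropy one should beat is the much larger quantity coming from the number of \emph{distinct realizable tuples} at a given bend budget.

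The hard part will be making this joint encoding tight enough to reach the exponent $2-\frac{1}{k}$. Concretely, I expect to invert the relationship established by the upper bound: Theorem~\ref{thm:main} achieves its bound because $k$ random permutations admit a partition into $O(n^{1-1/k})$ common monotone chains (Lemma~\ref{lem:bri}), and the matching lower bound should come from showing that $\Omega(n^{1-1/k})$ such chains are \emph{necessary} with overwhelming probability, then arguing that each unavoidable ``chain boundary'' forces $\Omega(n)$ bends to be distributed somewhere among the $k$ graphs, so the total is $\Omega(n \cdot n^{1-1/k}) = \Omega(n^{2-1/k})$ and by pigeonhole at least one graph carries an $\Omega(1/k)$ share, i.e.\ $\Omega(n^{2-1/k})$ after absorbing the constant $k$. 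The main obstacle is converting the encoding inequality into this per-chain bend charge rigorously: I would need the lower-bound companion to Lemma~\ref{lem:bri} (a high-probability lower bound on the minimum number of common increasing chains for $k$ random permutations, which is standard for patience-sorting-type quantities), and then to verify that Lemma~\ref{lem:lb}'s encoding is efficient enough that beating it against the $\Omega(kn)$ incompressibility of the tuple genuinely pins $\beta$ from below at the claimed rate rather than merely at $\Omega(n)$. Establishing that the slack in Lemma~\ref{lem:lb} does not wash out the gain is where the delicate accounting lives.
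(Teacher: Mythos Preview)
Your encoding framework is exactly the one the paper uses, and you were one step from the answer before taking a wrong turn. The error is your estimate for the number of labeled planar graphs: it is \emph{not} $2^{\Theta(n)}$. The number of planar graphs on the labeled vertex set $\{1,\dots,n\}$ is of order $n!\cdot\gamma^n$ for a constant $\gamma$, i.e.\ $2^{\,n\lg n + O(n)}$, not $2^{cn}$. (You may be thinking of unlabeled planar graphs, which do grow singly-exponentially.) The paper only needs the crude bound ``more than $n!$,'' which already gives $\lg n! \approx n\lg n$ bits of incompressibility per graph with overwhelming probability.

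With the correct count, the direct argument you set up and then discarded goes through immediately. Encoding the shared bijection $\gamma$ costs $\lg n!$ bits, and Lemma~\ref{lem:lb} encodes each $G_i$ in $n\lg\!\bigl(\tfrac{\beta+n}{n}\bigr)+O(n)$ bits, so
\[
kn\lg\!\Bigl(\tfrac{\beta+n}{n}\Bigr)+O(kn)+\lg n! \;\ge\; k\lg n! - \Delta
\]
with probability at least $1-2^{-\Delta}$. Rearranging and using Stirling gives $\lg\!\bigl(\tfrac{\beta+n}{n}\bigr)\ge(1-\tfrac{1}{k})\lg n - O(1)$, hence $\beta=\Omega(n^{2-1/k})$. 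Your proposed detour through common monotone chains and per-chain bend charges is unnecessary; the entire argument is the two-line inequality above, and the ``delicate accounting'' you anticipated never arises once the entropy lower bound is taken as $n\lg n$ rather than $\Theta(n)$.
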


\begin{proof}
Suppose that $G_1,G_2,\dots,G_k$ can be drawn on $P$ with $\beta$ total bends
for some vertex mapping $\gamma$. Since there are $n!$ possible vertex mappings,
$\gamma$ can be encoded using $\lg{n!}$ bits. Thus, $G_1,G_2,\dots,G_k$ can be
encoded using
\[ kn\lg\left(\frac{\beta + n}{n}\right) + O(kn) + \lg{n!} \]
bits by Lemma~\ref{lem:lb}.  Since there are more than $n!$ planar graphs on $n$
vertices, it follows that at least $\lg{n!} - \Delta$ bits are required to encode
a uniformly random planar graph with probability at least $1 - 2^{-\Delta}$. It
follows that
\[ kn\lg\left(\frac{\beta + n}{n}\right) + O(kn) + \lg{n!} \geq k\lg{n!} - \Delta \]
with probability at least $1 - 2^{-\Delta}$. Thus, there exists a constant $c$
for which
\[ kn\lg\left(\frac{(\beta + n)2^{c + \frac{\Delta}{kn}}n^\frac{1}{k}}{n}\right) \geq kn\lg{n} \]
with probability at least $1 - 2^{-\Delta}$ by Stirling's approximation.
Dividing a factor of $kn$ and exponentiating both sides shows that the
inequality
\[ \frac{(\beta + n)2^{c + \frac{\Delta}{kn}}n^\frac{1}{k}}{n} \geq n \]
or equivalently,
\[ \beta \geq \frac{n^{2 - \frac{1}{n}}}{2^{c + \frac{\Delta}{kn}}} - n \]
holds with probability at least $1 - 2^{-\Delta}$.
\qed
\end{proof}

\bibliographystyle{splncs}
\bibliography{references}

\appendix

\section{Proof of Lemma~\ref{lem:lb}}
\label{app:lb}

In this section, we will demonstrate how to encode a planar graph $G$ with a
fixed vertex mapping $\gamma$ under the assumption that $G$ can be drawn on a
convex point set $P$ with $\beta$ bends. The encoding technique involves a
series of decompositions of $G$. First, we reduce the problem to encoding a
spanning tree of $G$. To encode this spanning tree, we will construct a
Hamiltonian 3-regular graph from the union of the tree's edges and the edges on
the convex hull of $P$.

The encoding of this Hamiltonian 3-regular graph follows by its recursive
structure; that is, we will describe a recursive method to encode it. This
recursive procedure makes use of edge separators. In particular, we will use an
edge separator that takes into account the crossing number of a graph.

\begin{lemma}[\cite{leighton}]
Let $G=(V,E)$ be a graph with nonnegative vertex weights that sum to at most 1
and do not individually exceed 2/3. Then, $G$ has an edge separator of size
\[  1.58 \sqrt{16\mathrm{cr}(G) + \sum_{v \in V} deg^2(v) } \]
where $\mathrm{cr}(G)$ is the crossing number of $G$.
\end{lemma}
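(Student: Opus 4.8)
The plan is to reduce the general (non-planar) statement to a separator theorem for \emph{planar} graphs, paying for the non-planarity exactly once through the crossing number. First I would fix an optimal drawing of $G$ in the plane, one realizing $\mathrm{cr}(G)$ crossings, and \emph{planarize} it: at every crossing of two edges I insert a new dummy vertex of degree $4$, splitting each of the two crossing edges into two sub-edges. Call the resulting graph $\hat{G}$; it is planar, every original vertex retains both its incident-edge count and its weight, and each of the $\mathrm{cr}(G)$ dummy vertices is assigned weight $0$. Since subdivision leaves the degree of every original vertex unchanged and each dummy vertex has degree exactly $4$, we obtain the identity $\sum_{w \in \hat{G}} \deg^2(w) = \sum_{v \in V} \deg^2(v) + 16\,\mathrm{cr}(G)$, which is precisely the quantity appearing under the radical. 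Because the dummy vertices carry no weight, $\hat{G}$ still satisfies the hypotheses on the weights (summing to at most $1$, each at most $2/3$).

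With $\hat{G}$ planar, the task becomes a purely planar one: \textbf{a planar graph admits a balanced edge separator of size at most $1.58\sqrt{\sum_w \deg^2(w)}$}, where balance is measured by the given vertex weights and the size is bounded by the Euclidean norm of the degree sequence. I would establish this through a weighted simple-cycle separator in the spirit of the classical planar cycle-separator theorems: after reducing to a $2$-connected instance, find a simple cycle $C$ whose interior and exterior each carry weight at most $2/3$. To convert the \emph{vertex} separator $C$ into an \emph{edge} separator of $\hat{G}$, assign each cycle vertex to whichever side keeps the weights balanced and then cut only the edges running from $C$ to the opposite side. The constant $1.58 = \sqrt{2.5}$ is exactly what the sharpened planar cycle-separator constant delivers after this conversion.

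It then remains to lift the separator from $\hat{G}$ back to $G$. Every edge of $\hat{G}$ is a sub-edge of a unique original edge of $G$, so replacing each cut sub-edge by the original edge containing it yields a set of edges of $G$ no larger than the separator found in $\hat{G}$. Removing these edges disconnects $G$ exactly as they disconnected $\hat{G}$, since any $G$-path joining the two sides lifts to a $\hat{G}$-walk that must meet the $\hat{G}$-separator; and because the dummy vertices have weight $0$, the balance of the original vertex weights is inherited verbatim. This produces an edge separator of $G$ of size at most $1.58\sqrt{16\,\mathrm{cr}(G) + \sum_v \deg^2(v)}$, as claimed.

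The main obstacle is the planar step, and specifically obtaining the \emph{Euclidean-norm} dependence $\sqrt{\sum_v \deg^2(v)}$ with the sharp constant, rather than the cruder $\sqrt{n\,\Delta}$ that a direct maximum-degree argument produces. The naive conversion of a cycle separator into an edge cut — cutting \emph{all} edges incident to $C$ and bounding the count by Cauchy--Schwarz against the cycle length — overcounts by a factor that grows with the maximum degree. The delicate part is therefore to fold the degree information into the separator itself, weighting vertices so that the cycle avoids concentrations of high degree while still respecting the independent balance weights, and then to optimize the resulting estimate down to the constant $1.58$. Everything else, namely the crossing-vertex planarization and the lift-back, is bookkeeping that preserves both the weight balance and the edge count.
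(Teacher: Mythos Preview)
The paper does not prove this lemma at all: it is quoted verbatim as a known result from \cite{leighton} and used as a black box in the subsequent encoding argument. There is therefore no paper proof to compare against.

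That said, your outline is the standard reduction and matches how the result is actually established in the cited reference: planarize an optimal drawing by inserting a degree-$4$ dummy vertex at each crossing (weight $0$), observe that $\sum \deg^2$ in the planarized graph equals $16\,\mathrm{cr}(G)+\sum_v \deg^2(v)$, invoke a planar edge-separator theorem with the $\sqrt{\sum \deg^2}$ bound, and lift back. You have also correctly located the only nontrivial step, namely the planar edge-separator with Euclidean-norm dependence and the specific constant; that is precisely the content supplied by \cite{leighton}, and reproving it here would amount to reproving that paper's main theorem rather than anything in the present one.
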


Using such a separator, we can effectively decompose a Hamiltonian 3-regular
graph into 2 smaller Hamiltonian 3-regular graphs, separated by the edge separator.
The proof of the following lemma is essentially an analysis of this recursion.

\begin{lemma}
Let $G=(V,E)$ be a 3-regular graph with a fixed Hamiltonian cycle
$C=\{v_1v_2,v_2v_3,\dots,v_{n-1}v_n\}$. Then, $G$ can be encoded using
\[ \frac{n}{2}\lg\left(\frac{\mathrm{cr}(G) + n}{n}\right) + O(n) \]
bits, where $\mathrm{cr}(G)$ is the crossing number of $G$.
\end{lemma}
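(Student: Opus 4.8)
The plan is to reduce encoding $G$ to encoding a single perfect matching and then to encode that matching by a crossing-number-sensitive separator recursion. Since $G$ is $3$-regular and the two cycle edges $v_{i-1}v_i,\,v_iv_{i+1}$ are already incident to each $v_i$, every vertex meets exactly one further edge, so the non-cycle edges form a perfect matching $M$ on $V$; as the Hamiltonian cycle $C$ is fixed and free to name, encoding $G$ is equivalent to encoding $M$. For a $3$-regular graph $\sum_{v}\deg^2(v)=9n$, so the preceding lemma supplies an edge separator of size $s=1.58\sqrt{16\,\mathrm{cr}(G)+9n}=O(\sqrt{\mathrm{cr}(G)+n})$ that partitions $V$ into $A$ and $B$ with $|A|,|B|\in[n/3,2n/3]$. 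I would recurse on $A$ and $B$.

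To recurse, each side must again be a $3$-regular graph with a distinguished Hamiltonian cycle. Deleting the $s$ separator edges cuts $C$ into arcs that alternate between $A$ and $B$; within $A$ I would splice its arcs into one Hamiltonian cycle $C_A$ (one new edge per gap) and add a short matching on the $O(s)$ vertices whose $M$-partner lies across the cut, and symmetrically for $B$. To invert the step the decoder needs only the arc structure of the partition, the crossing matching edges, and a flag marking the $O(s)$ completion edges; the rest is recovered from the two recursive encodings. Because an optimal drawing of $G$ induces drawings of $G[A]$ and $G[B]$ whose crossings are disjoint, $\mathrm{cr}(G[A])+\mathrm{cr}(G[B])\le\mathrm{cr}(G)$, and the completion edges must be inserted so as to inflate this by at most a controlled additive amount.

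This produces a recursion
\[ T(n,x)\ \le\ \max\big(T(n_1,x_1)+T(n_2,x_2)\big)+O\!\big(\sqrt{x+n}\big), \]
over $n_1+n_2=n$ with $n_1,n_2\le\tfrac23 n$ and $x_1+x_2\le x$, together with the structural constraint $x_i=O(n_i^2)$ (a $3$-regular graph on $m$ vertices has crossing number $O(m^2)$). The $O(1)$ leaf costs sum to $O(n)$, so it remains to bound the total separator cost by $\tfrac n2\lg\frac{x+n}{n}+O(n)$. I would do this by the substitution method with a potential of the form $\tfrac n2\lg\frac{x+n}{n}+c_0n-c_1\sqrt n$, charging the separator term either against the concavity slack $\tfrac n2\lg\frac{x+n}{n}-\sum_i\tfrac{n_i}2\lg\frac{x_i+n_i}{n_i}$ or against the sublinear correction $c_1(\sqrt{n_1}+\sqrt{n_2}-\sqrt n)=\Theta(c_1\sqrt n)$.

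The main obstacle is this accounting, for two intertwined reasons. First, a naive description of a separator step costs $O(s\lg(n/s))$ bits—from naming which of the $n$ cycle edges are cut—whereas reaching the additive $O(n)$ requires the step to cost only $O(s)=O(\sqrt{x+n})$ bits; this forces a more careful encoding that describes successive separators by differences and amortizes their positions, rather than listing them. Second, the concavity slack can vanish: splitting both $n$ and $x$ exactly in half makes the three logarithmic terms cancel, leaving no room for the separator. What rescues the argument is the feasibility constraint $x_i=O(n_i^2)$, which makes a balanced split of a large crossing number impossible—two parts of size $\le\tfrac23 n$ cannot jointly carry $x=\Theta(n^2)$ crossings, so $x$ must drop by a constant factor and the logarithm's concavity yields slack of order $n\lg(x/n)$; in the complementary regime $x=O(n)$ the separators have size $O(\sqrt{n_v})$ and the $c_1\sqrt n$ term supplies the needed per-step slack. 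Verifying that these two regimes meet, and that the completion edges never spoil either $x_1+x_2\le x+O(s^2)$ or the balance, is where the real work lies. As a consistency check, the resulting bound $\tfrac n2\lg\frac{\mathrm{cr}(G)+n}{n}+O(n)$ is $O(n)$ when $G$ is near-planar and $\tfrac n2\lg n$ when $\mathrm{cr}(G)=\Theta(n^2)$, matching the entropy of a perfect matching.
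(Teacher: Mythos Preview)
Your overall framework matches the paper's: use the crossing-number-sensitive edge separator to recursively decompose the matching $M=E\setminus C$, encoding the separator at each step. One clean trick you miss is that the paper moves vertices across the cut (at the cost of at most doubling the separator) so that \emph{every} separator edge lies in $C$; each side then retains its full matching, which sidesteps your worry about completion matching edges inflating the crossing-number budget.

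The substantive gap is in your accounting. Your potential $\tfrac n2\lg\tfrac{x+n}{n}+c_0n-c_1\sqrt n$ together with a step cost of $O(\sqrt{x+n})$ does not close in the intermediate regime $n\ll x\ll n^2$. Take $x=n^{3/2}$: the feasibility constraint $x_i=O(n_i^2)$ permits a perfectly balanced split $n_i=n/2$, $x_i=x/2$, so the concavity slack of the $\tfrac n2\lg\tfrac{x+n}{n}$ term is zero, while your $\sqrt n$ correction yields only $\Theta(\sqrt n)$ of slack against a step cost of $\Theta(\sqrt x)=\Theta(n^{3/4})$. Your ``$x$ must drop by a constant factor'' rescue only bites when $x$ is already $\Theta(n^2)$. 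Separately, your proposal to amortize the positional description of the separator down from $s\lg(n/s)$ to $O(s)$ bits is not made precise, and it is not how the paper proceeds.

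The paper instead accepts the natural per-step cost $\lg\binom{n}{O(r)}\le 13r\lg(n/r)$ bits with $r=\sqrt{\sigma+n}$, and uses a stronger correction in the potential,
\[
\frac{n}{2}\lg\frac{\sigma+n}{n}+c_1 n\;-\;c_2\,\sqrt{\sigma+n}\,\lg\frac{n}{\sqrt{\sigma+n}},
\]
whose third term has exactly the right shape to absorb $r\lg(n/r)$. The induction then splits on $\lambda=(\sigma_1+n_1)/(\sigma+n)$: when $\lambda$ is bounded away from $0$ and $1$ the third term provides $\Theta(r\lg(n/r))$ slack directly; when $\lambda$ is extreme the first term's concavity yields an $\Omega(n)$ surplus that dominates. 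A base case $r>n/2^{45}$ is handled by encoding $M$ trivially in $\tfrac n2\lg n$ bits.
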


\begin{proof}
To prove the desired claim, we will prove the more precise bound
\[  \frac{n}{2}\lg\left(\frac{\sigma + n}{n}\right) + c_1n - c_2\sqrt{\sigma + n}\lg\left(\frac{n}{\sqrt{\sigma + n}}\right)  \]
by induction, where $\sigma$ is any parameter such that
$\sigma \geq \mathrm{cr(G)}$.

If $n \leq 2$, the claim holds trivially as no bits are required to encode $G$
and the bound we are trying to prove is nonnegative. Thus, we can proceed by
induction on $n$. 

Define $r = \sqrt{\sigma + n}$. We first make the assumption that $r \leq
\frac{n}{2^{45}}$. By the separator theorem, we can find a separator for $G$ of
size at most $1.58\sqrt{16\sigma + 9n}$ that partitions the vertices into two
sets $V_1$ and $V_2$, such that $\frac{1}{3}n \leq |V_1| \leq |V_2| \leq
\frac{2}{3}n$. Furthermore, we can assume that every edge in the separator was
in $C$ by at most doubling the size of the separator. Indeed, if any edge
$uv \in E \setminus C$  were such that $u \in V_1$ and $v \in V_2$, then we could add
$v$ to $V_1$, removing the edge $uv$ from the separator and adding at most two
edges incident to $v$ in $C$. Thus we can assume that the edge separator contains
only edges in $C$ and has size at most $2(1.58\sqrt{16\sigma + 9n}) \leq 13r$.

We can then encode $G$ recursively by encoding the two subgraphs induced by the
vertices in $V_1$ and the vertices in $V_2$ (after re-establishing the canonical
Hamiltonian cycle in each subgraph) and specifying how to combine these
subgraphs to construct $G$. It suffices to encode the edges in the separator, the
size of the separator, and a bit identifying which of $V_1$ or $V_2$ contained
$v_1$. This additional information costs at most
\[ \lg{ n \choose 13r} + \lg{13r} + 2 \]
bits, which by Stirling's approximation is at most $13r\lg\frac{n}{r}$ bits.
Thus, by induction, $G$ can be encoded using
\begin{equation}
\label{eqn:encoding_formula}
  T(n_1,\sigma_1) + T(n_2,\sigma_2) + 13r\lg\frac{n}{r}
\end{equation}
bits, where 
\[T(n,\sigma) = \frac{n}{2}\lg\left(\frac{\sigma + n}{n}\right) + c_1n - 
    c_2\sqrt{\sigma + n}\lg\left(\frac{n}{\sqrt{\sigma + n}}\right) \]
and $n_1,n_2,\sigma_1,\sigma_2$ satisfy
\[  n_1 + n_2 = n, \quad
    \sigma_1 + \sigma_2 = \sigma, \quad
    \frac{1}{3}n \leq n_1 \leq n_2 \leq \frac{2}{3}n. \]
Our goal thus is to bound the size of (\ref{eqn:encoding_formula}). Define $\alpha$ and $\lambda$ such that $\alpha n = n_1$ 
and $\lambda (\sigma + n) = \sigma_1 + n_1$. Thus, if follows that $(1-\alpha)n = n_2$ and $(1-\lambda)(\sigma + n) = \sigma_2 + n_2$. 
We can therefore express (\ref{eqn:encoding_formula}) as
\[  \frac{\alpha n}{2}\lg\left(\frac{\lambda(\sigma + n)}{\alpha n}\right) + c_1\alpha n - 
    c_2\sqrt{\lambda}r\lg\left(\frac{\alpha n}{\sqrt{\lambda}r}\right) \]
\[  {} + \frac{(1-\alpha) n}{2}\lg\left(\frac{(1-\lambda)(\sigma + n)}{(1-\alpha) n}\right) + c_1(1-\alpha) n - 
    c_2\sqrt{1 - \lambda}r\lg\left(\frac{(1 - \alpha) n}{\sqrt{1 - \lambda}r}\right) + 13r\lg\frac{n}{r} \]
bits or equivalently as
\[  \frac{\alpha n}{2}\lg\left(\frac{\lambda(\sigma + n)}{\alpha n}\right) + 
    \frac{(1-\alpha) n}{2}\lg\left(\frac{1-\lambda(\sigma + n)}{(1-\alpha) n}\right) + c_1n \]
\[  {} - c_2\left( \sqrt{\lambda}r\lg\left(\frac{\alpha n}{\sqrt{\lambda}r}\right) +
    \sqrt{1 - \lambda}r\lg\left(\frac{(1 - \alpha) n}{\sqrt{1 - \lambda}r}\right) - \frac{13}{c_2}r\lg\frac{n}{r} \right) \]
bits. To prove that this achieves the desired bound, we will consider 2 cases on the value of $\lambda$. First, assume that
$\frac{1}{6} \leq \lambda \leq \frac{5}{6}$. Observe that the function
\[  \frac{\alpha n}{2}\lg\left(\frac{\lambda(\sigma + n)}{\alpha n}\right) + 
    \frac{(1-\alpha) n}{2}\lg\left(\frac{(1-\lambda)(\sigma + n)}{(1-\alpha) n}\right) \]
achieves its maximum when $\lambda = \alpha$ and is therefore at most $\frac{n}{2}\lg\left(\frac{\sigma + n}{n}\right)$.
On the other hand, the function 
\[  \sqrt{\lambda}r\lg\left(\frac{\alpha n}{\sqrt{\lambda}r}\right) +
    \sqrt{1 - \lambda}r\lg\left(\frac{(1 - \alpha) n}{\sqrt{1 - \lambda}r}\right)  \]
achieves its minimum when $\lambda = \frac{1}{6}$ and $\alpha = \frac{1}{3}$ (without loss of generality). Thus, in this
case, the bound (\ref{eqn:encoding_formula}) is at most
\[  \frac{n}{2}\lg\left(\frac{\sigma + n}{n}\right) + c_1n \]
\[  {} - c_2\left( \sqrt{1/6}r\lg\left(\sqrt{2/3}\frac{n}{r}\right) +
    \sqrt{5/6}r\lg\left(\sqrt{8/15}\frac{n}{r}\right) - \frac{13}{c_2}r\lg\frac{n}{r} \right) \]
which simplifies to
\[  \frac{n}{2}\lg\left(\frac{\sigma + n}{n}\right) + c_1n - c_2r\lg{\frac{n}{r}} \]
\[  {} + c_2\left( \frac{13}{c_2}r\lg\frac{n}{r} + (\sqrt{1/6}\lg\sqrt{2/3} + \sqrt{5/6}\lg\sqrt{15/8})r - 
    (\sqrt{1/6} + \sqrt{5/6} - 1)r\lg\frac{n}{r} \right) \]
which is at most
\[  \frac{n}{2}\lg\left(\frac{\sigma + n}{n}\right) + c_1n - c_2r\lg{\frac{n}{r}} + 
    c_2\left( \frac{13}{c_2}r\lg\frac{n}{r} + \frac{3}{5}r - \frac{1}{3}r\lg\frac{n}{r} \right) \]
and, by setting $c_2 = 41$ is just
\[  \frac{n}{2}\lg\left(\frac{\sigma + n}{n}\right) + c_1n - c_2r\lg{\frac{n}{r}}  \]
since we assumed that $r \leq \frac{n}{2^{45}}$.

Next, we assume that $\lambda \leq \frac{1}{6}$. In this case, the function
\[  \frac{\alpha n}{2}\lg\left(\frac{\lambda(\sigma + n)}{\alpha n}\right) + 
    \frac{(1-\alpha) n}{2}\lg\left(\frac{(1-\lambda)(\sigma + n)}{(1-\alpha) n}\right) \]
achieves its maximum when $\lambda = \frac{1}{6}$ and $\alpha = \frac{1}{3}$ and is therefore at most
\[  \frac{n}{2}\lg\left(\frac{\sigma + n}{n}\right) - \frac{n}{18}. \]
Furthermore, the function
\[  \sqrt{\lambda}r\lg\left(\frac{\alpha n}{\sqrt{\lambda}r}\right) +
    \sqrt{1 - \lambda}r\lg\left(\frac{(1 - \alpha) n}{\sqrt{1 - \lambda}r}\right)  \]
is at least
\[  r\lg\frac{n}{r} - 2r  \]
since $\frac{1}{3} \leq \alpha \leq \frac{2}{3}$. Thus, it follows that the bound (\ref{eqn:encoding_formula}) is at most
\[  \frac{n}{2}\lg\left(\frac{\sigma + n}{n}\right) + c_1n - c_2r\lg{\frac{n}{r}} + 2r + 13r\lg\frac{n}{r} - \frac{n}{18} \]
which again is at most
\[  \frac{n}{2}\lg\left(\frac{\sigma + n}{n}\right) + c_1n - c_2r\lg{\frac{n}{r}} \]
as we have assumed that $r \leq \frac{n}{2^{45}}$. 

To complete the proof, we consider the case that $r > \frac{n}{2^{45}}$. That
is, we can assume that $\frac{n + \sigma}{n} > \frac{n}{2^{90}}$. As $G$ can be
encoded simply by encoding the edges in $E \setminus C$, it follows 
that $G$ can be encoded using $\frac{n}{2}\lg n$ bits or equivalently,
\[  \frac{n}{2}\lg{\frac{n}{2^{90}}} + 45n  \]
which in this case is at most
\[  \frac{n}{2}\lg\left(\frac{n + \sigma}{n}\right) + 45n  \]
bits. Furthermore, since 
\[  r\lg{\frac{n}{r}} \leq n \]
for all values of $r$, it follows that $G$ can be encoded using 
\[  \frac{n}{2}\lg\left(\frac{n + \sigma}{n}\right) + (45 + c_2)n - c_2n \leq \frac{n}{2}\lg\left(\frac{n + \sigma}{n}\right) + c_1n - c_2r\lg{\frac{n}{r}} \]
bits for any $c_1 \geq 86$ as we set $c_2 = 41$. 
\qed
\end{proof}

\begin{corollary}
\label{cor:tree}
Let $T$ be an ordered tree on $n$ vertices $V$, and let $P$ be a convex point
set. If $T$ can be drawn with a fixed vertex mapping $\gamma : V \to P$ such
that edges bend a total of $\beta$ times, then $T$ can be encoded with
\[ n\lg\left(\frac{\beta + n}{n}\right) + O(n) \]
bits.
\end{corollary}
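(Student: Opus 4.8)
The plan is to reduce the corollary to the preceding lemma on encoding $3$-regular Hamiltonian graphs by crossing number. Concretely, from the ordered tree $T$ and the convex point set $P$ I would build a $3$-regular graph $G'$ on $N = 2(n-1)$ vertices, equipped with a distinguished Hamiltonian cycle $C'$, so that the pair $(G', C')$ together with $O(n)$ bits of auxiliary data determines $T$. Since $P$ is convex, its points appear in a known cyclic order $p_1, p_2, \dots, p_n$, and the fixed mapping $\gamma$ identifies each labeled vertex of $T$ with one of these positions; thus encoding $T$ amounts to recording which interleaving pairs of hull positions are joined by tree edges. Applying the previous lemma to $G'$ and substituting a bound on $\mathrm{cr}(G')$ in terms of $\beta$ will then yield the claimed bit count.

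For the construction, I would replace each vertex $v$ of $T$ by a gadget consisting of $\deg_T(v)$ copies placed in a tiny neighborhood of $\gamma(v)$, so that the copies remain in convex position and there are $\sum_v \deg_T(v) = 2(n-1)$ copies in total. The Hamiltonian cycle $C'$ follows the convex hull while threading through the copies within each gadget, and each edge $uv$ of $T$ becomes a single matching edge joining one copy at $u$ to one copy at $v$. Every copy then has degree two from $C'$ and degree one from its incident tree edge, so $G'$ is $3$-regular and Hamiltonian. The cycle $C'$ is determined by $P$ once the degree sequence of $T$ is known, and the degree sequence --- a composition of $2(n-1)$ into $n$ parts --- can be recorded separately using $O(n)$ bits, which is absorbed into the additive term. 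Given $(G', C')$ and this data, one recovers $T$ by contracting each gadget back to a single vertex and reading off the matching edges.

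The crux is bounding the crossing number of $G'$, and this is where the hypothesis on bends enters. I would draw $G'$ by placing the copies at their convex positions and drawing every matching (tree) edge as a straight segment; two such segments cross exactly when their endpoints interleave around the hull, while $C'$ runs along the boundary and crosses the matching edges only $O(n)$ times. Hence $\mathrm{cr}(G')$ is at most the number of interleaving pairs of tree edges plus $O(n)$. The main obstacle is to show this quantity is $O(\beta + n)$: starting from the given crossing-free drawing of $T$ with $\beta$ total bends and vertices in convex position, I would argue by a charging scheme that each interleaving pair forces the crossing-free routing to spend bends, so that the number of interleaving pairs is at most a constant times $\beta$ plus $O(n)$. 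With $\mathrm{cr}(G') = O(\beta + n)$ in hand, I set $\sigma = O(\beta + n)$ and $N = 2(n-1)$ in the previous lemma; since
\[ \frac{N}{2}\lg\!\left(\frac{\sigma + N}{N}\right) + O(N) = n\lg\!\left(\frac{\beta + n}{n}\right) + O(n), \]
adding the $O(n)$ bits for the degree sequence completes the encoding of $T$.
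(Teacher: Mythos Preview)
Your high-level plan coincides with the paper's: split each vertex of $T$ into $\deg_T(v)$ copies along the hull to obtain a $3$-regular Hamiltonian graph $G'$ on $2(n-1)$ vertices, bound $\mathrm{cr}(G')$ in terms of $\beta$, invoke the preceding lemma, and spend $O(n)$ additional bits to record the gadget boundaries so that $T$ can be recovered. That part is fine and matches the paper essentially verbatim.

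The gap is in how you bound $\mathrm{cr}(G')$. You propose to draw the tree edges as straight chords, so that the crossings are exactly the interleaving pairs, and then argue via an unspecified ``charging scheme'' that the number of interleaving pairs is $O(\beta+n)$. This is precisely the step you flag as ``the main obstacle'', and it is not justified: a single edge that exits the hull once (costing $O(1)$ bends) can simultaneously avoid every edge it interleaves with, so there is no evident way to charge each interleaving pair to a bounded number of bends. Nothing in your outline rules out a tree with $\Theta(n^{2})$ interleaving pairs that nonetheless admits a crossing-free drawing with far fewer bends.

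The paper sidesteps this entirely by never passing through the straight-line drawing. It bounds $\mathrm{cr}(G')$ using the \emph{given} $\beta$-bend drawing of $T$ directly: each linear piece of a tree edge crosses the convex-hull boundary at most twice (at most once for a piece incident to a vertex), so the edges of $T$ cross the hull cycle $C$ at most $2\beta$ times in total. Hence $\mathrm{cr}(T\cup C)\le 2\beta$ already, and since splitting a vertex into consecutive copies along $C$ cannot increase the crossing number, $\mathrm{cr}(G')\le 2\beta$ as well. Substituting this into the preceding lemma gives the stated bound with no charging argument needed. Replacing your straight-chord detour with this hull-crossing count closes the gap and makes your argument coincide with the paper's.
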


\begin{proof}
Consider the boundary of the convex hull of the convex point set to which $V$ is
mapped by $\gamma$. In an embedding of $T$ with a total of $\beta$ bends under
this vertex mapping, the edges in $T$ can cross this boundary at most $2\beta$
times. Indeed, each piece composing an edge in $T$ can cross the boundary at
most twice, except those incident to a vertex, which can cross the boundary at
most once. Thus, if we construct a graph $G$ by the union of $T$ and the cycle
$C$ defined by the boundary of the convex hull of the point set, it follows that
$G$ has crossing number at most $2\beta$. Replace each vertex $v$ in $G$ with a
set of vertices, all of which are consecutive along $C$ and each of which is
incident to a unique edge that was incident to $v$ in $T$. This operation
produces a Hamiltonian 3-regular graph on $2n - 2$ vertices. Furthermore, as
this operation cannot increase the crossing number, it follows that the
resulting graph can be encoded with
\[ n\lg\left(\frac{\beta + n}{n}\right) + O(n) \]
bits. To recover $G$ from the encoding of this graph, it suffices to encode the
blocks of consecutive vertices along $C$ that corresponded to original vertices.
This can trivially be done using $O(n)$ bits. We can then recover the tree $T$
from $G$ as the vertex mapping is fixed. \qed
\end{proof}

To conclude this section, we describe how to use Corollary~\ref{cor:tree} to
encode an arbitrary planar graph $G$ drawn with a fixed vertex mapping into a
convex point set $P$ with $\beta$ total bends. Observe that $G$ can be assumed
to be connected. If not, we could make $G$ connected by introducing edges with
at most $O(\beta + n)$ total bends. In which case, the embedding of $G$ has an
ordered spanning tree $T$ as a subgraph that can be drawn with a fixed vertex
mapping into a convex point set using at most $O(\beta + n)$ total bends. By
Corollary~\ref{cor:tree}, $T$ can be encoded using 
\[ n\lg\left(\frac{\beta + n}{n}\right) + O(n) \]
bits.

We claim that only $O(n)$ bits are required to recover $G$ from the encoding of
$T$. Using the technique from \cite{pach}, we can introduce a Hamiltonian cycle
in $G$ by adding edges along a walk of the spanning tree $T$, starting from an
arbitrary vertex. The resulting Hamiltonian graph can be recovered from an
encoding of the Hamiltonian cycle and the encoding of the two outerplanar
graphs defining the edges inside and outside the Hamiltonian cycle. The
outerplanar graphs can be encoded using $O(n)$ bends (by a bijection with well
formed parenthesizations). Moreover, by construction, the order of the vertices
in the Hamiltonian cycle is recoverable from $T$. To construct the Hamiltonian
graph required introducing at most $O(n)$ subdivision vertices in $G$. Thus, we
can encode which vertices these corresponded to using $O(n)$ bits and recover
the original graph by removing them and the edges in the Hamiltonian cycle.

\end{document}